\documentclass[preprint, 10pt]{elsarticle}
\usepackage{graphicx}
 \expandafter\let\csname equation*\endcsname\relax
 \expandafter\let\csname endequation*\endcsname\relax
\usepackage{amsmath}
\usepackage{amsthm}
\usepackage{subfig}
\usepackage{bm}
\usepackage{xcolor}
\usepackage{booktabs}


\newtheorem{theorem}{Theorem}
\newtheorem{example}{Example}

\newcommand{\eps}{\varepsilon}
\renewcommand{\d}{\mathrm{d}}

\renewcommand{\vec}[1]{{\mathbf{#1}}}

\newcommand{\partialx}{\frac{\partial}{\partial x}}
\newcommand{\partialy}{\frac{\partial}{\partial y}}
\newcommand{\partialzeta}{\frac{\partial}{\partial\zeta}}
\newcommand{\partialeta}{\frac{\partial}{\partial \eta}}




\begin{document}

\title{Streamline integration as a method for structured grid generation in X-point geometry }
\author[1]{M. Wiesenberger}
\ead{mattwi@fysik.dtu.dk}
\author[2]{M. Held}
\author[3,4]{L. Einkemmer}
\author[2]{A. Kendl}
\address[1]{Department of Physics, Technical University of Denmark (DTU), 2800 Kgs. Lyngby, Denmark}
\address[2]{Institute for Ion Physics and Applied Physics,
  Universit{\"a}t Innsbruck, 6020 Innsbruck, Austria}
\address[3]{Department of Mathematics, Universit{\"a}t T\"ubingen, 72076 T\"ubingen, Germany}
 \address[4]{Department of Mathematics, Universit\"at Innsbruck, 6020 Innsbruck, Austria}

\begin{abstract}
  We investigate structured grids aligned to the contours of a two-dimensional flux-function with an X-point (saddle point).
  Our theoretical analysis finds that orthogonal grids
exist if and only if the Laplacian of the flux-function vanishes
at the X-point.
  In general, this condition
  is sufficient for the existence of a structured aligned grid with an X-point.
  With the help of streamline integration we then
  propose a numerical grid construction algorithm.
  In a suitably chosen monitor metric the Laplacian of the flux-function vanishes
  at the X-point such that a grid construction is possible.

  We study the convergence
  of the solution to elliptic equations
  on the proposed grid.
  The diverging volume element and cell sizes at the X-point reduce the convergence
  rate. As a consequence, the proposed grid should
  be used with grid refinement around the
  X-point in practical applications. We show that grid refinement in the cells neighbouring the X-point
  restores the expected convergence rate.

\end{abstract}
\begin{keyword}
X-point; Monitor metric; Streamline integration; Structured grid
\end{keyword}

\maketitle
\section{Introduction}
A magnetic X-point is particularly advantageous for the confinement of particles and thermal energy inside a magnetic fusion device~\cite{Wesson}.
For this reason, two- and three-dimensional simulations that encompass the X-point in the cross-section
of magnetically confined fusion plasmas have
emerged in past years~\cite{Rognlien1992,Schneider2006, Huysmans2007,Xu2008, Chang2009, Hoelzl2012,Dudson2015, Tamain2016, Dudson2017, Reiser2017, Galassi2017}.
There, so-called \textit{flux-surfaces}~\cite{Wesson} bound the idealized toroidally symmetric physical
domain.
Analytically, the flux-surfaces are represented by the contour lines of the \textit{flux-function} $\psi$, which at an X-point has a vanishing gradient and an indefinite Hessian matrix.
It has proven advantageous to use
grid points that align with this flux-function in numerical simulations.
This is especially true in the closed field line region, where flux-aligned structures
like zonal flows regulate the turbulent transport~\cite{Diamond2005}.
Furthermore, once the domain of interest is bounded by flux-surfaces,
a "flux-aligned" grid allows for an easy treatment of boundary conditions.

Unfortunately, structured grids (grids
generated by a coordinate transformation) aligned to flux-surfaces may lead to
numerical issues when an X-point is present in the domain.
This is because one coordinate of a structured aligned grid is necessarily the
flux-function $\psi$ itself or a monotonous function of it\footnote{
  This is just a re-expression of the alignment condition.}.
  Since $\psi$ has per definition
a saddle point with $\nabla\psi=0$, the Jacobian of the coordinate transformation
vanishes at this point and the transformation becomes singular.
This also entails vanishing or diverging elements in the metric tensor,
which appear in the physical equations transformed to the new coordinate system
and therefore enter the numerical discretization.
However, these issues do not directly manifest in the grid points
themselves.
In fact, with the help of streamline integration~\cite{haeseleer,Wiesenberger2017}
it is fairly straightforward to numerically construct grid points that
are aligned with the flux-surfaces. What is unclear is whether
\begin{itemize}
  \item these then
actually represent a (homeomorphic) coordinate transformation,
 \item a numerical
 scheme can cope with the singularity (consistence),
 \item the
convergence rate of a numerical scheme is affected by the singularity.
\end{itemize}
For example,
in an elliptic equation the solution depends
on all points in the domain and we cannot a priori know whether singular points reduce or prevent the
global convergence rate of the solution.
 We are the first to address these concerns, which have not been studied systematically in the literature so far.
Nevertheless, results from simulations on structured aligned grids have already been published~\cite{Tamain2016, Dudson2017, Reiser2017, Galassi2017} without
investigating or solving the above issues.
We believe that the therein presented conclusions require a discussion in the light of the numerical uncertainties and the results in the present article.

Let us mention that, of course, the use of coordinate patches or entirely unstructured grids is
always possible and circumvents the problem~\cite{Huysmans2007, Nishimura2006, Chang2009a,Zhang2016}.
Still, we investigate the use of structured grids in this contribution as they have
several advantages.
First of all, numerical methods on structured grids are very easily implemented.
Unstructured coordinate patches
introduce an overhead due to the additional bookkeeping induced by the explicit topological information
of grid patches or cells.
Furthermore, this overhead necessarily leads to a loss in performance over
structured grids since for
example in the computation of derivatives the additional topological information
needs to be separately loaded from the system memory.
This is detrimental for memory bandwidth bound problems.

For completeness let us also mention
recent approaches to use non flux-aligned grids for
the discretization of model equations~\cite{Hariri2014, Stegmeir2014, Held2016}.
Like unstructured grids, these avoid numerical issues with the X-point but shift the problem
to the question of how to correctly implement a flux-aligned boundary.

Finally, let us note that even though we motivated the problem from within the
field of magnetic confinement fusion, its nature is purely mathematical.
Our results therefore apply to any situation in which an alignment of a
numerical grid to a two-dimensional function with X-point is desirable.
Also note that in this contribution the discussion of \textit{O-points} (extrema of
the flux-function) is missing. This is because we assume the Hessian matrix of the
flux-function to be
indefinite in our derivation and the results therefore do not apply to O-points.

In this contribution we investigate how
structured grids can be consistently constructed and how
numerical methods behave
when there is an X-point present in the
computational domain.
In Section~\ref{sec:theory} we discuss general
properties of structured grids aligned to flux-surfaces from
an analytical point of view. We derive a consistency equation that all
structured grids aligned to a flux-function have to obey. Based on this
we derive necessary and sufficient conditions to fulfil this equation.
We then propose a grid generation algorithm for orthogonal grids in Section~\ref{sec:construction}.
Our algorithm is based on streamline integration~\cite{haeseleer, Wiesenberger2017} and assumes that the Laplacian of the flux-function vanishes at the X-point.
This technique allows the efficient computation of grid coordinates as well as the corresponding
Jacobian and therefore metric elements up to machine precision.
We pay special attention to the discretization of the separatrix (the contour line through the X-point).
In the following Section~\ref{sec:monitor} we then show how our algorithm applies to cases with
a non-vanishing Laplacian at the X-point. We introduce the concept of a \textit{monitor metric}.
Finally, in Section~\ref{sec:numerics} we apply our algorithm first to an
analytical example and second to a practical problem taken from the field of magnetically
confined fusion.
With the analytical example we in particular show how grid generation algorithms fail without monitor metric.
For the second case we solve an elliptic equation on our generated grid and
show convergence rates of a local discontinuous Galerkin discretization
of various order~\cite{Cockburn2001}.
If the solution varies across the X-point, we need grid refinement to restore the
convergence of our solution, which otherwise deteriorates to order one in the cell-size due to the
diverging volume element.


\section{Structured grids with X-point} \label{sec:theory}
Given is a two-dimensional
flux-function $\psi(x,y)$ in some coordinates $x$ and $y$. At one point $x_X,y_X$
this function has a saddle point (the {\it X-point}), where the gradient vanishes and the
Hessian matrix is indefinite.
Let us assume the existence of a metric tensor\footnote{This metric
later becomes the monitor metric.} $g$ with elements
given in the coordinates $x$ and $y$.
We now express a coordinate system $\zeta, \eta$ with $\zeta$ aligned to $\psi(x,y)$ as\footnote{
Here and in the following we use the notation
$\psi_x := \partial\psi/\partial x$, $\psi_{xx}:=\partial^2\psi/\partial x^2$, ... }
\begin{subequations}
\begin{align}
  \d \zeta & =  f(\psi)(\psi_x \d x + \psi_y \d y) \label{eq:orthogonala} \\
  \d \eta  & = a(x,y)\sqrt{g}[ -\psi^y \d x + \psi^x \d y ]  - b(x,y)[
  \psi_x \d x +  \psi_y \d y] \label{eq:orthogonalb}
\end{align}
  \label{eq:orthogonal}
\end{subequations}
Equation~\eqref{eq:orthogonala} expresses the alignment property $\d\zeta = f(\psi)\d\psi$
with $f(\psi) \neq 0$.
Our choice for the form of $\d\eta$ in Eq.~\eqref{eq:orthogonalb} becomes apparent further down in the text.
It is a re-expression of the general exact 1-form in two dimensions,
$\d \eta = \eta_x \d x + \eta_y \d y$.
In place of $\eta_x$ and $\eta_y$ we introduce the two free functions $a(x,y)\neq 0$ (if $f$ or $a$ were zero at a point, the coordinate transformation would become singular) and $b(x,y)$.
We have the contravariant components of $\nabla \psi$,
\begin{align*}
\psi^x := g^{xx}\psi_x + g^{xy}\psi_y,\quad
\psi^y := g^{xy}\psi_x + g^{yy}\psi_y
  \label{}
\end{align*}
and the element of the volume form
$\sqrt{g}:=(g_{xx}g_{yy}-g_{xy}g_{xy})^{1/2}$.
Then,
$\eta_x = - \sqrt{g}\psi^y a - \psi_x b$ and
$\eta_y =  \sqrt{g}\psi^x a - \psi_y b$,
which is invertible for $a$ and $b$ if
$(\nabla\psi)^2 = \psi^x\psi_x + \psi^y\psi_y \neq 0$.
In fact, we then have
\begin{align}
  a = \frac{\psi_x\eta_y-\psi_y\eta_x}{\sqrt{g}(\nabla\psi)^2}, \quad
  b = -\frac{\psi^x\eta_x + \psi^y \eta_y}{(\nabla\psi)^2}.
  \label{eq:ab}
\end{align}
Recall the familiar rules for tensor transformation (e.g.~\cite{Frankel}). The elements of the inverse metric tensor $g^{-1}$ in the transformed coordinates read
\begin{subequations}
\begin{align}
  \left .\begin{pmatrix}
    \bar g^{\zeta\zeta} & \bar g^{\zeta\eta} \\
    \bar g^{\eta\zeta} & \bar g^{\eta\eta}
  \end{pmatrix} \right|_{\zeta(x,y), \eta(x,y)}
   = \left(\nabla\psi\right)^2\left .\begin{pmatrix}
    f^2 & -bf \\
    -bf & a^2+b^2
  \end{pmatrix}\right|_{x,y}, \quad
\sqrt{ \bar g}^{\,-1} = (\nabla\psi)^2 a f,
\label{eq:metric_transformed}
\end{align}
\end{subequations}
which shows that we obtain an orthogonal grid (a grid in which the base vectors are
orthogonal in the given metric) with $b=0$. We denote $\bar g^{ij}$ as the elements of
$g^{-1}$ in the transformed coordinate system $\zeta,\eta$ and analogous $\sqrt{\bar g}$ the element
of the volume form in transformed coordinates.

  Let us emphasize here that the metric tensor $g$ is not necessarily the canonical, Cartesian metric.
  We only assume that $g$, as well as the coordinates $x$ and $y$, are well-defined and do not expose any singularities.
  Further note that our choice of notation is based on differential forms rather than what is
  traditionally used in the plasma physics literature~\cite{haeseleer}. In this way the relation between the metric tensor and the more fundamental objects (covariant and contravariant base vectors) is disentangled.
  This becomes advantageous in Sections~\ref{sec:construction} and \ref{sec:monitor}, where we want to freely choose the metric tensor.

Now, we place ourselves in a reverse position. If $\psi$ and the metric $g$ are given,
is it possible to find $\zeta$ and $\eta$ in
the form presented in Eq.~\eqref{eq:orthogonal}?
In fact, this question is equivalent to finding conditions for the functions $a$, $b$ and $f$
such that the right-hand sides of Eqs.~\eqref{eq:orthogonala} and \eqref{eq:orthogonalb} are exact forms.
Recall that the Poincar\'e lemma states that a closed form is exact~\cite{Frankel}.
Therefore, $f\d\psi$  has a potential $\zeta$ if $\d(f\d\psi)= 0$. This results in
\begin{align*}
  f_y \psi_x - f_x \psi_y = 0
  \label{}
\end{align*}
and
is fulfilled if $f$
is a function of $\psi$ only.
In order for the coordinate $\eta$ to exist it must hold that
$\d[(-a\sqrt{g}\psi^y-b\psi_x)\d x + (\sqrt{g}\psi^x a - \psi_y b)\d y] = 0$.
  In coordinates that is
\[ \partialx\left(\sqrt{ g}a\psi^x - b\psi_y\right) + \partialy\left(\sqrt{g} a\psi^y + b\psi_x\right) = 0. \]
This can be rewritten to
\begin{equation}
  a \Delta \psi + \nabla\psi\cdot \nabla a + \{ \psi, b\} = 0,
    \label{eq:consistency}
\end{equation}
where
$\Delta\psi$ is the Laplacian operator given by
\begin{align}
  \Delta\psi := \frac{1}{\sqrt{g}}\left(\partialx\left(\sqrt{g}\psi^x\right) + \partialy\left( \sqrt{g} \psi^y\right)\right)
  \label{eq:def_laplacian}
\end{align}
and we identified the Poisson bracket\footnote{
The interested reader will recognize that this is indeed the correct definition of the
Poisson bracket since the volume form in two-dimensions can be identified with
the symplectic (area) 2-form. The elements of the inverse symplectic form are the Poisson
brackets of the coordinates among themselves~\cite{Frankel}.}
\[\{\psi,b\} := \frac{1}{\sqrt{g}}(\psi_xb_y -\psi_y b_x).\]
It is the recovery of the Laplacian, the gradient and the Poisson bracket in Eq.~\eqref{eq:consistency} that justifies our choice of Eq.~\eqref{eq:orthogonalb}.

Since the flux-function $\psi(x,y)$ and the metric $g$ are given, Eq.~\eqref{eq:consistency} is a constraint on the functions
$a(x,y)$ and $b(x,y)$. We call Eq.~\eqref{eq:consistency}
the {\it consistency equation} and for the remainder of this section we focus on its
implications.
Apparently, the problematic point is the X-point,
where $\psi_x$ and $\psi_y$ vanish, but $\Delta\psi$ might not. We
therefore ask under what circumstances well-defined solutions $a$ and $b$ exist, depending on the
properties of $\psi$ at the X-point.
A vanishing Laplacian is in fact a very desirable quality of $\psi$.
At this point an example is instructive.
\begin{example} \label{ex:1} We consider $\psi = \tfrac{1}{2}(x^2-y^2)$ and $f(\psi)=1$ in the canonical (Cartesian) metric, for which $\Delta \psi = 0$.
  One possible choice for the second coordinate is $\eta = xy$. Equation \eqref{eq:ab} yields at once $a=1$ and $b=0$
    that is, we have obtained an orthogonal coordinate system ($b=0$).
    The non-zero metric elements are $\bar g^{\zeta\zeta}=\bar g^{\eta\eta}=x^{2}+y^{2}$ from which we can compute the volume element $\sqrt{\bar g} = 1/(x^2+y^2)$.
\end{example}
It turns out that $\Delta\psi=0$ at the X-point is sufficient for the existence of well-defined $a$ and $b$
solving Eq.~\eqref{eq:consistency}.
We prove this by actually constructing an algorithm in Section~\ref{sec:construction}.

The following theorem shows what a vanishing Laplacian means in geometrical terms.
Without loss of generality we
assume $\psi(x_X,y_X) = 0$ and call the curve given implicitly
by $\psi(x,y) = 0$ the {\it separatrix}.
\begin{theorem}
  If $\Delta\psi|_{x_X,y_X}=0$ in a given metric $g$,
then the tangent vectors to the separatrix are orthogonal at the X-point in this metric.
\end{theorem}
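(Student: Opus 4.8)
The plan is to reduce the statement to a purely local linear-algebra fact at the X-point, by Taylor expanding $\psi$ and identifying $\Delta\psi$ there with the trace of $g^{-1}$ times the Hessian. First I would expand $\psi$ about $(x_X,y_X)$. Because $\nabla\psi=0$ at the X-point and $\psi(x_X,y_X)=0$, the separatrix $\psi=0$ is, to leading order, the zero set of the Hessian quadratic form $\vec{t}^{\T} H \vec{t}$, where $H$ collects the second derivatives $\psi_{xx},\psi_{xy},\psi_{yy}$. Since $H$ is indefinite by assumption, this quadratic form factors into two distinct real linear factors, so the separatrix has exactly two branches through the X-point whose tangent directions $\vec{t}_1,\vec{t}_2$ are precisely the two null directions of $H$, i.e. the solutions of $\vec{t}^{\T} H \vec{t}=0$. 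Orthogonality in the metric $g$ therefore means $\vec{t}_1^{\T} g\, \vec{t}_2=0$, and this is the quantity I ultimately want to control.

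Next I would evaluate $\Delta\psi$ at the X-point directly from its definition in Eq.~\eqref{eq:def_laplacian}. The key simplification is that $\psi^x=\psi^y=0$ there, since both are proportional to $\psi_x,\psi_y$; consequently every term in which a derivative falls on $\sqrt{g}$ or on a metric coefficient is killed by the remaining factor $\psi_x$ or $\psi_y$. Only the pure second-derivative contributions survive, and I expect to obtain the compact coordinate-free expression $\Delta\psi|_{x_X,y_X}=\mathrm{tr}(g^{-1}H)=g^{xx}\psi_{xx}+2g^{xy}\psi_{xy}+g^{yy}\psi_{yy}$, evaluated at the X-point. This single identity is what ties the analytic hypothesis to the geometry.

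It then remains to establish the equivalence: the null directions of an indefinite symmetric form $H$ are $g$-orthogonal precisely when $\mathrm{tr}(g^{-1}H)=0$. I would prove this by choosing a $g$-orthonormal basis, in which $g$ becomes the identity and $g$-orthogonality reduces to ordinary orthogonality, and then diagonalizing $H$ (still $g$-orthogonally). In this basis $\mathrm{tr}(g^{-1}H)=\lambda_1+\lambda_2$ with $\lambda_1>0>\lambda_2$ by indefiniteness, and the two null directions are $(1,\pm\sqrt{-\lambda_1/\lambda_2})$. Their inner product is $1+\lambda_1/\lambda_2$, which vanishes if and only if $\lambda_1+\lambda_2=0$, i.e. if and only if the trace vanishes. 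Combined with the identity of the previous paragraph, this yields the theorem, because $g$ is positive definite and so the overall factor relating $\mathrm{tr}(g^{-1}H)$ to $\Delta\psi$ never vanishes.

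I expect the main obstacle to be bookkeeping rather than anything conceptual: carefully justifying that all first-derivative terms in $\Delta\psi$ drop out at the X-point, so that the Laplacian genuinely collapses to $\mathrm{tr}(g^{-1}H)$, and handling the degenerate parametrization when a tangent direction is vertical, so that a naive slope description $(1,m)$ must be replaced by the symmetric null-vector formulation. Working throughout with the bilinear forms $H$ and $g$ rather than with explicit slopes sidesteps the latter difficulty and keeps the entire argument coordinate-independent.
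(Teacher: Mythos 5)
Your proposal is correct, and its skeleton matches the paper's proof: Taylor-expand $\psi$ at the X-point, identify the separatrix tangents with the null directions of the Hessian $H$, and use that $\psi_x=\psi_y=0$ there collapses the Laplacian of Eq.~\eqref{eq:def_laplacian} to $g^{xx}\psi_{xx}+2g^{xy}\psi_{xy}+g^{yy}\psi_{yy}=\mathrm{tr}(g^{-1}H)$. Where you genuinely differ is the final step. The paper solves the quadratic explicitly, writing the two tangent vectors as $\vec s_{1,2}=(-\psi_{xy}\pm\sqrt{\psi_{xy}^2-\psi_{xx}\psi_{yy}},\,\psi_{xx})^{\mathrm{T}}$, and then verifies $\sum_{ij}g_{ij}s_1^i s_2^j=0$ by direct computation using the trace condition. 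You instead prove a coordinate-free lemma: the null directions of an indefinite symmetric form $H$ are $g$-orthogonal if and only if $\mathrm{tr}(g^{-1}H)=0$, established by passing to a $g$-orthonormal basis, diagonalizing $H$ there, and computing the inner product $1+\lambda_1/\lambda_2$ of the null directions $(1,\pm\sqrt{-\lambda_1/\lambda_2})$; the invariance of $\mathrm{tr}(g^{-1}H)$ under the similarity transformation makes this legitimate. Your route buys two things the paper's computation does not: it avoids the degenerate parametrization when $\psi_{xx}=0$ (e.g. $\psi=xy$, where the paper's explicit vectors collapse to the zero vector and the argument would need a coordinate swap), and it delivers the full equivalence between the vanishing Laplacian and orthogonality --- which is exactly what the paper later claims in its summary (item 1) but whose converse direction the stated theorem and proof do not actually cover. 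The paper's version, in exchange, is shorter and fully explicit. Both are valid proofs of the stated theorem.
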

\begin{proof}
Let us expand $\psi$ around the X-point
\begin{align*}
\psi(x,y) = \frac{1}{2}
(x-x_X, y-y_X)^{\mathrm{T}}
\left . \begin{pmatrix}
\psi_{xx} & \psi_{xy}\\
\psi_{xy} & \psi_{yy}
\end{pmatrix}\right|_{x_X, y_X}
\begin{pmatrix}
x-x_X\\
y-y_X
\end{pmatrix}
+ \dots
\end{align*}
Neglecting higher order terms the equation $\psi(x,y)=0$ then yields a quadratic equation
for $x-x_X, y-y_X$ with the two solution vectors
\begin{align*}
  \vec s_{1,2}
=
\left . \begin{pmatrix}
  -\psi_{xy} \pm \sqrt{\psi_{xy}^2-\psi_{xx}\psi_{yy}}\\
\psi_{xx}
\end{pmatrix}\right|_{x_X, y_X}.
\end{align*}
Now we use that the Laplacian of $\psi$ in the metric $g$ vanishes at the X-point
\begin{align*}
  g^{xx}\psi_{xx} + 2g^{xy} \psi_{xy} + g^{yy} \psi_{yy} = 0,
  \label{}
\end{align*}
where we used that $\psi_x=\psi_y=0$ at the X-point.
With this we can readily compute $ \sum_{ij=1}^2g_{ij} s_1^is_2^j= 0$ that is $\vec s_1$ and $\vec s_2$ are perpendicular at the X-point in the metric $g$.
\end{proof}

Now, we can of course ask what happens if $\Delta\psi\neq0$ at the X-point.
The first observation we make is that for such a non-orthogonal X-point no coordinate system can exist such that $a$ and $b$ as well as their derivatives are bounded.
\begin{theorem} If $a$ and $b$ as well as their derivatives are bounded, then it must hold that $\Delta \psi=0$ at the X-point.
\end{theorem}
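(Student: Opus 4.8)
The plan is to evaluate the consistency equation~\eqref{eq:consistency} at the X-point and to exploit that $\nabla\psi$ vanishes there. The identity
\[ a\,\Delta\psi + \nabla\psi\cdot\nabla a + \{\psi,b\} = 0 \]
holds throughout the region in which $\zeta,\eta$ form a valid coordinate transformation, i.e. wherever $(\nabla\psi)^2\neq 0$. First I would argue that, under the boundedness hypothesis, every summand extends continuously to $(x_X,y_X)$, so that the identity persists in the limit $(x,y)\to(x_X,y_X)$ and can be read off there.

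Next I would treat the three terms separately. At the X-point $\psi_x=\psi_y=0$ by definition, and since $g$ is non-singular with finite entries, the contravariant components $\psi^x=g^{xx}\psi_x+g^{xy}\psi_y$ and $\psi^y=g^{xy}\psi_x+g^{yy}\psi_y$ vanish as well. The gradient term $\nabla\psi\cdot\nabla a=\psi^x a_x+\psi^y a_y$ is then a product of factors tending to zero with the \emph{bounded} derivatives $a_x,a_y$, so it vanishes at the X-point. Likewise the Poisson bracket $\{\psi,b\}=\tfrac{1}{\sqrt g}(\psi_x b_y-\psi_y b_x)$ vanishes there, because $\psi_x,\psi_y\to 0$, the derivatives $b_x,b_y$ are bounded, and $\sqrt g$ is bounded away from zero by the non-degeneracy of $g$.

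What survives of the consistency equation at the X-point is therefore $a\,\Delta\psi=0$. Since $a\neq 0$ is required everywhere for the transformation to stay regular, one divides by $a$ and concludes $\Delta\psi|_{x_X,y_X}=0$, which is exactly the claim.

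I expect the only genuine obstacle to be the continuity step at the start: Eq.~\eqref{eq:consistency} is derived as the closedness condition $\d(\d\eta)=0$ on the open set $(\nabla\psi)^2\neq 0$, and one must justify that it carries over to the X-point itself. This is precisely where the boundedness of $a$, $b$ and their first derivatives is used — it makes $a$ and $b$ Lipschitz, hence continuously extendable, and renders all summands of the expanded consistency equation continuous up to $(x_X,y_X)$. Since they sum to zero on the punctured neighbourhood, they sum to zero at the X-point by continuity, and the remaining manipulations are immediate.
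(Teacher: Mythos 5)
Your proof is correct and takes essentially the same route as the paper's: evaluate the consistency equation~\eqref{eq:consistency} at the X-point, where $\psi_x=\psi_y=0$ combined with the boundedness of the derivatives of $a$ and $b$ annihilates the terms $\nabla\psi\cdot\nabla a$ and $\{\psi,b\}$, leaving $a\,\Delta\psi=0$, and then divide by $a\neq 0$. The only difference is that you make explicit the continuity/limiting argument needed to carry the identity from the punctured neighbourhood to the X-point itself, a step the paper leaves implicit.
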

\begin{proof}Since $\psi_x = \psi_y = 0$ at the X-point, Eq.~\eqref{eq:consistency} gives $a \Delta \psi =0$. For $a \neq 0$ this can only be satisfied if $\Delta \psi=0$.
\end{proof}

Let us now turn our attention to the case in which either $a$ or $b$ is allowed to diverge at
the X-point. Special cases worth investigating are $a=1$ as in the grid proposed by~\cite{Ribeiro2010} and $b=0$, which yields an orthogonal grid.
It turns out that $\Delta\psi=0$ at the X-point is also a necessary condition for well-defined
$a$ and $b$ to exist in these cases:
\begin{theorem}\label{th:orthogonality} If $\psi$ is a smooth function on a bounded domain that includes a non-orthogonal X-point with $\Delta\psi\neq0$, then there exists no flux-aligned coordinate system $(\zeta,\eta)$ for which $a=1$ and $b$ is well defined at the X-point.
Analogously there exists no flux-aligned coordinate system for which $b=0$ and $a$ is well defined at the X-point.
\end{theorem}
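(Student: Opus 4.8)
The plan is to reduce each of the two cases to a transport equation along a flow generated by $\psi$, and then to exploit that this flow has a hyperbolic saddle at the X-point. For $a=1$ the gradient term in Eq.~\eqref{eq:consistency} vanishes and the consistency equation collapses to $\{\psi,b\}=-\Delta\psi$. I would read the Poisson bracket as a directional derivative: along the Hamiltonian flow $\tfrac{\ud\vec r}{\ud t}=\tfrac{1}{\sqrt g}(-\psi_y,\psi_x)$, whose integral curves are exactly the level sets $\psi=\mathrm{const}$, one has $\tfrac{\ud b}{\ud t}=\{\psi,b\}=-\Delta\psi$. For $b=0$ the same equation becomes $\nabla\psi\cdot\nabla a+a\Delta\psi=0$, and since $a\neq0$ this is $\nabla\psi\cdot\nabla(\log a)=-\Delta\psi$; introducing instead the gradient flow $\tfrac{\ud\vec r}{\ud t}=(\psi^x,\psi^y)$, whose curves cross the level sets, gives $\tfrac{\ud(\log a)}{\ud t}=-\Delta\psi$. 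In both cases the transported quantity ($b$, respectively $\log a$) changes at the rate $-\Delta\psi$, which approaches the nonzero constant $-\Delta\psi|_{x_X,y_X}$ near the X-point.

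Next I would analyse both flows at the X-point. Because $\psi_x=\psi_y=0$ there, it is a fixed point of each flow. Linearising produces $\tfrac{1}{\sqrt g}\left(\begin{smallmatrix}-\psi_{xy}&-\psi_{yy}\\ \psi_{xx}&\psi_{xy}\end{smallmatrix}\right)$ for the Hamiltonian flow and $g^{-1}H$, with $H$ the Hessian of $\psi$, for the gradient flow. The first matrix has zero trace and determinant proportional to $\det H=\psi_{xx}\psi_{yy}-\psi_{xy}^2<0$; the second is conjugate to the symmetric $g^{-1/2}Hg^{-1/2}$, which by Sylvester's law of inertia keeps the indefinite signature of $H$. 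Hence each flow has a hyperbolic saddle at the X-point, with real eigenvalues of opposite sign and stable/unstable directions given by the corresponding eigenvectors (for the Hamiltonian flow these are exactly the separatrix tangents $\vec s_{1,2}$).

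The contradiction would then come from the slow passage near a saddle. Fixing a small neighbourhood $U$ of the X-point on which $|\Delta\psi|\geq c_0>0$ for some constant $c_0$ (possible since $\Delta\psi$ is continuous and nonzero at the X-point), a trajectory passing close to the fixed point --- parametrised by the level $\psi=c$ for the Hamiltonian flow, or by its distance of closest approach for the gradient flow --- spends a transit time through $U$ that diverges logarithmically as the trajectory approaches the union of the stable and unstable manifolds. Since on $U$ the rate stays bounded away from zero, the net change of $b$ (respectively $\log a$) across $U$ grows at least like a positive constant times this diverging time. On the other hand, the two points where such a trajectory meets $\partial U$ converge to the points at which the stable and unstable manifolds meet $\partial U$, which lie away from the X-point; there $b$ (respectively $a$, with $a\neq0$) is well defined, so the change across $U$ would be forced to a finite limit. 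This contradiction rules out a bounded $b$ when $a=1$, and a well-defined nonzero $a$ when $b=0$, whenever $\Delta\psi\neq0$.

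The hard part will be the transit-time estimate for a general smooth $\psi$ and metric $g$ rather than for the quadratic saddle model: one must control the nonlinear corrections to the linearised flow to guarantee the logarithmic divergence and to ensure that the divergent contribution really originates in the region where $\Delta\psi\approx\Delta\psi|_{x_X,y_X}$. This is a standard normal-form or Hartman--Grobman analysis near a hyperbolic fixed point, but it carries the analytic weight of the proof; the algebraic reductions of the first two paragraphs are routine once Eq.~\eqref{eq:consistency} is available.
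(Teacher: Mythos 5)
Your opening reductions are exactly the paper's: substituting $a=1$ into Eq.~\eqref{eq:consistency} to get $\{\psi,b\}=-\Delta\psi$, reading this as transport along the Hamiltonian flow, and handling $b=0$ by transporting $\ln a$ along the gradient flow $(\psi^x,\psi^y)$. Where you genuinely depart from the paper is the mechanism of the contradiction. The paper never considers trajectories that pass \emph{near} the saddle; it integrates along the characteristics that terminate \emph{at} the X-point, i.e.\ the stable and unstable branches (for the Hamiltonian flow these are the separatrix branches themselves). Because the characteristic vector field vanishes at the X-point, such a curve takes infinite time to reach it, while the source term $\sqrt{g}\,\Delta\psi$ stays bounded away from zero in a neighbourhood; hence $b(t)=b(0)-\int_0^t\sqrt{g}\,\Delta\psi\,\mathrm{d}t'$ tends to $-\infty$ along the branch reaching the X-point as $t\to+\infty$ and to $+\infty$ along the branch reaching it as $t\to-\infty$, contradicting any finite value of $b$ at the X-point. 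This makes the divergence immediate and requires no saddle asymptotics whatsoever. Your route, by contrast, buys a slightly stronger conclusion (no solution $b$, respectively $\ln a$, can even be continuous up to the separatrix on a punctured neighbourhood, irrespective of its behaviour at the X-point itself), and your Sylvester-inertia check that the gradient flow is also hyperbolic justifies a point the paper leaves implicit, namely that characteristics converging to the X-point exist in the $b=0$ case as well.

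The weakness is that your contradiction hinges on the transit-time divergence, which you explicitly defer to a normal-form/Hartman--Grobman analysis; as written, that is the missing piece, and it is also heavier machinery than the statement needs. Note that you do not need the logarithmic rate at all: mere divergence of the transit time suffices, and that follows from a soft continuous-dependence argument --- if the transit times through $U$ stayed bounded along a sequence of trajectories whose entry points converge to the entry point of the stable manifold, then by continuous dependence on initial data the limiting trajectory would have to leave $U$ in finite time, contradicting the fact that the stable branch remains in $U$ for all positive times. With that soft lemma your argument closes; alternatively, switching to the paper's choice of characteristics (those ending at the X-point) removes the need for any transit-time estimate.
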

\begin{proof} Substituting $a=1$ into equation (\ref{eq:consistency}) gives
  \[ \{\psi,b\} = - \Delta \psi \]
	and thus
	\[ -\psi_y b_x + \psi_x b_y = - \sqrt{ g}\Delta \psi. \]
	By the method of characteristics we obtain curves $x(t)$, $y(t)$, and $b(t)$ such that
	\[ \dot{x}(t) = -\psi_y(x(t),y(t)), \qquad \dot{y}(t) = \psi_x(x(t),y(t)), \qquad \dot{b} = -\sqrt{g}\Delta \psi(x(t),y(t)). \]
	This implies
	\[ b(t) = b(0) - \int_0^t \sqrt{g}\Delta \psi(x(t'),y(t'))\,\mathrm{d}t'. \]
	Without loss of generality we assume that the X-point is located at $(0,0)$ and that $\Delta \psi(0,0)>0$. Now, we consider characteristics curves $(x_1(t),y_1(t))$ and $(x_2(t),y_2(t))$ such that $\lim_{t\to\infty} (x_1(t),y_1(t)) = 0$ and $\lim_{t\to-\infty} (x_2(t),y_2(t)) = 0$. Then $b(0,0) = \lim_{t\to\infty} b_1(t) = -\infty$ and $b(0,0) = \lim_{t\to-\infty} b_2(t) = \infty$. Thus, we have obtained a contradiction.
The proof for $b=0$ is analogous with the characteristic curves given by $\psi^x(x(t),y(t))$ and $\psi^y(x(t),y(t))$ and we replace $b$ with $\ln a $.
\end{proof}

One might be tempted to conjecture that assuming $a$ is bounded is already enough to rule out the existence of a coordinate system altogether. However, this is not the case as the next example shows.

\begin{example} \label{ex:2}We consider $\psi = \tfrac{1}{2}(2x^2-y^2)$ with $f(\psi)=1$ and $g$ as the canonical metric and look for $\eta$ given by a polynomial. One possibility is
\(\eta = xy\),
which leads to
\[ a = \frac{2x^2 + y^2}{4x^2+y^{2}}, \qquad b=-\frac{xy}{4x^2+y^2}. \]%
    Note that we can easily determine that $a>0$. The volume element is given by
    \( \sqrt{\bar g} = 1/(2x^2 + y^2) \)
    and, as before, diverges as we approach the X-point.
\end{example}

The major difference between the coordinate system considered in Example \ref{ex:2} compared to the orthogonal grid in Example \ref{ex:1} is that the limits of $a$ and $b$ differ as we approach the X-point from different directions.
Thus, there is no uniquely defined value of $a$ and $b$ at the X-point, although a perhaps more serious concern for both coordinate systems is the fact that the volume element $\sqrt{\bar g}$ diverges as we approach the X-point. This is clearly an undesirable property as it means that the X-point is not adequately resolved.

We thus ask the question: is it possible to construct a coordinate system such that the volume element remains bounded as we approach the X-point? The following theorem gives a negative answer.

%

\begin{theorem}\label{th:volume} If $\psi$ is a smooth function on a bounded domain that includes an X-point,
  then there exists no flux-aligned coordinate system $(\zeta,\eta) \in \Omega$, where $\Omega$ is a bounded domain, for which $\sqrt{\bar g}$ is bounded at the X-point.
\end{theorem}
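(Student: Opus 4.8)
The plan is to reduce the integral of $\sqrt{\bar g}$ along a flux surface to a quantity that does not involve the free functions $a$ and $b$ at all, to show that this quantity diverges as the flux surface approaches the separatrix, and then to play this divergence off against the boundedness of the coordinate domain $\Omega$.

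First I would evaluate $\sqrt{\bar g}$ along a single flux surface $\psi=c$, on which $\d\psi=0$ and hence $\d\zeta=0$. There the $b$-term in Eq.~\eqref{eq:orthogonalb} drops out and $\d\eta = a\sqrt{g}(-\psi^y\d x+\psi^x\d y)$, which along the level set equals $\d\eta = a\,|\nabla\psi|\,\d S$, where $|\nabla\psi|:=\sqrt{(\nabla\psi)^2}$ is the gradient magnitude in $g$ and $\d S$ is the arc-length element of the level set in $g$. Combining this with $\sqrt{\bar g}=[(\nabla\psi)^2 a f]^{-1}$ from Eq.~\eqref{eq:metric_transformed}, the factor $a$ cancels and I obtain the key identity
\begin{align*}
  \sqrt{\bar g}\,\d\eta = \frac{\d S}{f(\psi)\,|\nabla\psi|}.
\end{align*}
This is just the coarea decomposition of the invariant volume form, and it shows that the line density of volume along a flux surface is independent of the choice of $a$ and $b$. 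Integrating over the portion of the surface contained in $\Omega$ gives, with $c=\psi$ fixed on the surface,
\begin{align*}
  L(\zeta):=\int\sqrt{\bar g}\,\d\eta = \frac{1}{f(c)}\int_{\psi=c}\frac{\d S}{|\nabla\psi|}.
\end{align*}

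Now I would argue by contradiction. Assume $\sqrt{\bar g}\le M$ in a neighbourhood of the X-point. Because $\Omega$ is bounded, the $\eta$-extent of every flux surface is at most $\mathrm{diam}(\Omega)$, so $L(\zeta)\le M\,\mathrm{diam}(\Omega)$ holds uniformly. Since $f(c)\to f(0)\neq0$ as $c\to0$, it remains only to show that $\int_{\psi=c}\d S/|\nabla\psi|\to\infty$ as the level set tends to the separatrix; this divergence is the crux of the argument and the step I expect to be the main obstacle.

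To establish it I would localise near the X-point, taken at the origin with $\psi(0,0)=0$. Simultaneously diagonalising the positive-definite metric $g$ and the indefinite, nondegenerate Hessian at the X-point, I may use coordinates that are orthonormal for $g$ there and in which $\psi=\tfrac12(\alpha u^2-\beta v^2)+O(r^3)$ with $\alpha,\beta>0$, so that $|\nabla\psi|^2=\alpha^2u^2+\beta^2v^2+O(r^3)$. Parametrising the branch $\psi=c>0$ by $u=\sqrt{2c/\alpha}\cosh\tau$ and $v=\sqrt{2c/\beta}\sinh\tau$, a short computation gives $\d S/|\nabla\psi|\to(\alpha\beta)^{-1/2}\,\d\tau$ as $\tau\to\pm\infty$, while reaching a fixed boundary distance $\delta$ forces $|\tau|$ up to order $-\tfrac12\ln c$. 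Hence the contribution of a fixed neighbourhood of the X-point grows at least like $-C\ln c\to\infty$, which is exactly the logarithmic blow-up of the orbit period near a hyperbolic fixed point. The only care needed is to verify that reinstating the true metric $g$ and the higher-order terms in $\psi$ perturbs the integrand only by factors that stay bounded (above and below) on a fixed neighbourhood, so that the logarithmic divergence survives; with that in hand $L(\zeta)\to\infty$, contradicting the uniform bound above and completing the proof.
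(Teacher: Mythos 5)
Your proof is correct, and although its algebraic core is the same as the paper's, the contradiction is reached by a genuinely different route. Your key identity is the paper's relation in disguise: the paper eliminates $a$ by writing $\sqrt{\bar g}^{-1} = f\{\psi,\eta\}$ and integrating along the characteristics $\dot x = -f\psi_y$, $\dot y = f\psi_x$, along which $\d S = |f|\sqrt{g}\,|\nabla\psi|\,\d t$, so its relation $\d\eta/\d t=\sqrt{g}/\sqrt{\bar g}$ is exactly your $\sqrt{\bar g}\,\d\eta = \d S/(f|\nabla\psi|)$. The difference is where the divergence is harvested. The paper follows a single characteristic that limits into the X-point (a separatrix branch): since the vector field vanishes there, the transit time $t$ is infinite, and boundedness of $\sqrt{\bar g}$ gives $\dot\eta \ge \delta>0$, hence $\eta\to\infty$, contradicting boundedness of $\Omega$; this argument is soft and needs no local expansion, but it rests on the existence of a coordinate line actually passing through the X-point, which the paper only asserts ("at least one coordinate line must pass through the X-point"). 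You instead stay on the regular level sets $\psi=c$ and show that the transit "time" through a fixed neighbourhood of the X-point diverges like $\ln(1/c)$ via the hyperbolic normal form; this buys a quantitative rate and avoids any statement about curves reaching the X-point, at the cost of the expansion and the error-control step you flag (which is routine: the Morse lemma reduces $\psi$ to an exact quadratic, and on a small ball the smooth positive definite $g$ is comparable, above and below, to the Euclidean metric, so the logarithmic divergence survives bounded multiplicative factors). Two points to tighten in your write-up: (i) you assume $\sqrt{\bar g}\le M$ only near the X-point, so localize the identity to that neighbourhood rather than integrating over the whole portion of the level set in $\Omega$ — your divergence estimate is local anyway, so nothing is lost; (ii) to bound $\int|\d\eta|$ by $\mathrm{diam}(\Omega)$ you need the total variation of $\eta$ along the branch, not just the endpoint difference, so note that $\eta$ is monotone along each level-curve branch because $a\neq 0$ there.
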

\begin{proof} Substituting equation (\ref{eq:ab}) into $\sqrt{\bar g}^{-1}=(\nabla \psi)^2 a f$ gives
  \[ \sqrt{\bar g}^{-1} = f \{\psi,\eta\} = f( \psi_x \eta_y - \psi_y \eta_x )/\sqrt{g}. \]
    With $\psi$ given and an arbitrary (but fixed) $\sqrt{\bar g}^{-1}$ this yields a first order partial differential equation that can be solved for $\eta$. Employing the method of characteristics we obtain $x(t)$, $y(t)$, and $\eta(t)$ which satisfy the following relations
    \[ \dot{x}(t) = -f \psi_y(x(t),y(t)), \quad
       \dot{y}(t) =  f \psi_x(x(t),y(t)), \quad
       \dot{\eta}(t) = \sqrt{\bar g/g}^{-1}(x(t),y(t)). \]
    Let us assume, without loss of generality, that the X-point is located at $(0,0)$. We now pick a characteristic curve $(x(t),y(t))$ starting at $(x_0,y_0)\neq 0$ that passes through the X-point (the existence of such a curve follows from the fact that at least one coordinate line must pass through the X-point). Since, $(\dot{x}(t),\dot{y}(t))\to 0$ as we approach the X-point, we have $\lim_{t\to\infty} (x(t),y(t)) = 0$ (strictly speaking $t\to-\infty$ is another possibility that is handled by exactly the same argument as is given below for $t\to\infty$).

    Now, let us assume that the volume element $\sqrt{\bar g}$ is bounded as we approach the X-point. Then we can find a constant $\delta$ such that $\sqrt{\bar g}^{-1} \geq \delta$ which implies
    \[ \lim_{t\to\infty} \eta(t) = \lim_{t\to\infty} \int_0^t \sqrt{\bar g/g}^{-1}(x(t),y(t))\,\mathrm{d}t
       \geq \lim_{t\to\infty} \delta t = \infty, \]
    which is a contradiction to the assumption that $(\zeta,\eta)\in\Omega$ with $\Omega$ bounded. Thus, we conclude that $\sqrt{\bar g}\to\infty$ as we approach the X-point.
\end{proof}
In summary, we have proven three major results for structured flux-aligned grids:
\begin{enumerate}
  \item A vanishing Laplacian of the flux-function at the X-point is equivalent to orthogonality of the tangent vectors to the separatrix.
  \item Orthogonal grids and the grid proposed by Reference~\cite{Ribeiro2010} exist if and only if the Laplacian of the flux-function vanishes at the X-point.
  \item The volume element in the transformed coordinate necessarily diverges at the X-point.
\end{enumerate}

\section{Orthogonal grid generation for $\Delta\psi=0$ at the X-point} \label{sec:construction}
In this section we construct an algorithm for the case $\Delta\psi=0$ at the X-point.
We begin to
show how a structured orthogonal grid can be constructed in an arbitrary metric,
then choose a discretization of the computational domain and finally
summarize the proposed algorithm.
Note that this algorithm is an extension of one of our previously suggested algorithms
in Reference~\cite{Wiesenberger2017}.
We consider only two dimensions but let us remark that the extension of the coordinate system to three
dimensions is straightforward in axisymmetric cases\footnote{
Identify $x,y$ with the cylindrical coordinates $R,Z$.
The toroidal angle $\varphi$ is used
as the third coordinate, which is orthogonal to $R,Z$ and
the associated metric element is $g_{\varphi\varphi}=R^2$.
}.

\subsection{Orthogonal grid construction}\label{sec:orthogonal}
In general, the coordinate system $\zeta, \eta$, orthogonal in the prescribed metric $g$, with $\zeta$ aligned to $\psi$, is described by Eq.~\eqref{eq:orthogonal} with $b=0$:
\begin{subequations}
\begin{align}
  \d \zeta & = f(\psi)(\psi_x \d x + \psi_y \d y) \\
  \d \eta  & = a(x,y) \sqrt{g}( -\psi^y \d x + \psi^x \d y)
\end{align}
  \label{eq:orthogonal_monitor}
\end{subequations}
This yields the determinant of the Jacobian matrix
${J}^{-1} = (\nabla\psi)^2 a f\sqrt{g}$.
From the rules of inverse coordinate transformations
we directly see that the contravariant basis vector fields are
\begin{subequations}
\begin{align}
  \partialzeta &= x_\zeta\partialx + y_\zeta\partialy = \frac{1}{(\nabla\psi)^2f} \left(\psi^x \partialx + \psi^y\partialy \right) \\
  \partialeta &=  x_\eta\partialx + y_\eta\partialy =  \frac{1}{(\nabla\psi)^2a\sqrt{g}} \left(-\psi_y \partialx + \psi_x\partialy \right)
\label{eq:orthogonal_linesb}
\end{align}
\label{eq:orthogonal_lines}
\end{subequations}
that is $\partialzeta$ points in the direction of the gradient of $\psi$ and $\partialeta$ in the direction of surfaces given by $\psi=\text{const}$.
We choose $f(\psi)=f_0=\text{const}$.
With our choice we directly get
\begin{align}
  \zeta(x,y) = f_0\psi(x,y)
  \label{eq:zeta}
\end{align}
such that $\zeta=0$ at the separatrix.

As explained in Section~\ref{sec:theory} the function $a(x,y)$ is not arbitrary.  Equation \eqref{eq:consistency} becomes
\begin{align}
  \left(\psi^x\partialx + \psi^y\partialy\right) a = f(\nabla\psi)^2 \partialzeta a= -a\Delta \psi .
  \label{eq:hequation}
\end{align}
In order to integrate Eq.~\eqref{eq:hequation}
we must choose initial conditions for $a$.
This choice together with the normalization of coordinates
depends on the domain that we want to discretize.
Let us remark that if $\Delta\psi=0$ in the whole domain, we directly get a conformal grid with our algorithm. This can be seen as then $a(\zeta,\eta) = f_0$.

\subsection{Domain}
Our goal is to generate a structured grid in a domain
bounded by $\psi_0 < 0$ and $\psi_1 >0$.
We assume that this region forms an ``8'' shape, or a ``surface with two holes'' above
and below the X-point. However, we cut the domain below the X-point.
We are then left with a
region as depicted in Fig.~\ref{fig:topology}.
\begin{figure}[htpb]
    \centering
    \includegraphics[width= 0.8\textwidth]{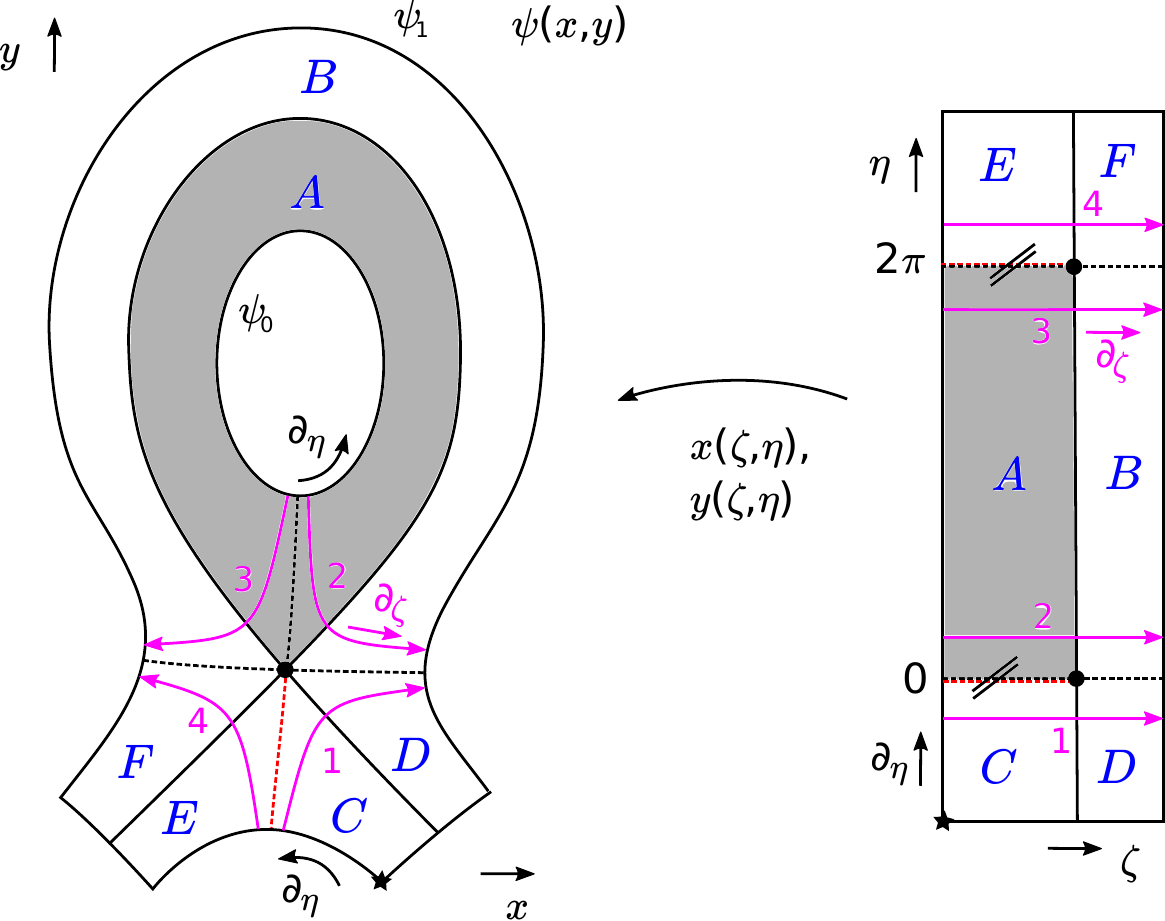}
    \caption{
      Sketch of coordinate transformation. The coordinates $x$ and $y$ (the physical space) are
      pulled back to the orthogonal coordinates $\zeta$, $\eta$ (computational space).
      Note the
      special topology of the 6 coordinate patches induced by the X-point. Patch A is periodic in $\eta$,
      which is depicted by the ``double lines''.
      Patches C and E are connected, which we depict with the red dashed line.
  }
    \label{fig:topology}
\end{figure}
Here, we show a sketch of the
coordinate transformation.
To the left we depict the physical space and to the right the
computational space.
The physical space is covered by 6 coordinate patches labelled A to F.
The topology can be understood by following
the neighbouring coordinate lines 1 and 4 as well as 2 and 3. When passing
the separatrix, line 1 becomes adjacent to line 2, while line 3 changes
neighbour to line 4.
Also note that patch A is periodic and that patches C and E are connected.
This type of structured grid, in which the topology between blocks
has to be separately given, is called {\it block-structured}~\cite{Liseikin}.
When implementing derivatives on the computational grid this
topology has to be taken into account.


\subsection{Normalization}
Now, the problem is how and at what points to fix initial conditions
for $a$ in order to integrate Eq.~\eqref{eq:hequation}. We require
that $a(x,y)$ is continuous and differentiable.
One suggestion would be to set $a=a_0$
on an arbitrary contour line $\psi(x,y) = \psi_0$.
This is indeed a valid choice for the case without X-point.
However, as discussed above, the X-point exchanges the neighbours
of streamlines that pass by it. This means that even though
$a$ is continuous on neighbouring streamlines of $\partial/\partial \zeta$ initially,
it is not guaranteed to be at later stages.
Imagine we chose $a=const$ on the inner flux surface in Fig.~\ref{fig:topology}.
When we integrate Eq.~\eqref{eq:hequation} along $\partial/\partial \zeta$ beyond the separatrix
some streamlines like number $2$ and $3$ change neighbours. This potentially
induces discontinuities among the coordinate patches.
Another uncertainty is the value of $a$ at the
X-point itself since we approach the X-point from two different
sides (line $1$ and $2$ for example).
Similar issues appear if $a$ is chosen on a flux surface on the
outside of the domain.
Our solution to this problem is to initialize $a$ on
the separatrix itself.
Following streamlines along $\partial/\partial \zeta$ away from the separatrix
neighbouring streamlines stay neighbours. Thus, if $a$ is continuous on the separatrix,
the discontinuities among coordinate patches can be avoided.

We trace the separatrix with the streamline of $\partialeta$ at $\zeta=0$.
It can be parameterized by any suitable function.
For the sake of discussion, let us choose the geometric angle $\theta$ defined with respect to an arbitrary point inside the innermost flux surface
(cf. Reference \cite{Wiesenberger2017}).
\begin{subequations}
\begin{align}
  \left . \frac{\d x}{\d \theta}\right|_{\zeta=0}=\frac{x_\eta}{\theta_\eta} &= \frac{-\psi_y}{\psi_x\theta_y - \psi_y \theta_x}\\
  \left . \frac{\d y}{\d \theta}\right|_{\zeta=0}=\frac{y_\eta}{\theta_\eta} &= \frac{\psi_x}{\psi_x\theta_y - \psi_y \theta_x}\\
  \left . \frac{\d \eta}{\d \theta}\right|_{\zeta=0}=\frac{1}{\theta_\eta} &= \frac{\sqrt{g}(\nabla\psi)^2}{\psi_x\theta_y - \psi_y \theta_x}a_0
\end{align}
  \label{eq:etaline}
\end{subequations}
We normalize $a_0$ such that $\eta \in [0,2\pi]$ when we
follow the separatrix in patch A in Fig.~\ref{fig:topology}, that is,
\[ 2\pi = \oint_{\psi=0}\d \eta = \oint_0^{2\pi} \frac{\d \eta}{\d\theta}\bigg|_{\zeta=0}\d \theta \nonumber \]
  or
\begin{align}
  f_0 = a_0 = \frac{2\pi}{\int_0^{2\pi}\d\theta \frac{\sqrt{g}(\nabla\psi)^2}{\psi_x\theta_y - \psi_y \theta_x}}.
  \label{eq:definef}
\end{align}
As initial point for the integration of Eq.~\eqref{eq:etaline} we can use any point with $\psi(x,y) = 0$.
These can be found with a standard bisection
algorithm.
Note that we cannot numerically integrate Eq.~\eqref{eq:etaline}
across the X-point due to the vanishing gradient in $\psi$.
However, we
can integrate towards and close to the X-point. This might be numerically expensive since very small step sizes have to be used, but
can be achieved with sufficient accuracy.

Having chosen values for $f_0$ and $a_0$ the coordinate transformation is now
completely fixed.
In order to find any coordinate $x(\zeta,\eta), y(\zeta, \eta)$ together with its Jacobian
we integrate the streamlines of $\partial/\partial \eta$
and $\partial/\partial \zeta$ given in Eq.~\eqref{eq:orthogonal_lines}.
We do this by first integrating
$\partial / \partial \eta$ on the separatrix (where $a_0$ is known) up to the desired $\eta$ and then following
$\partial / \partial\zeta$ up to $\zeta$
with the obtained starting point.
In order to get $a$ we simply integrate Eq.~\eqref{eq:hequation} along the coordinate
lines.
\begin{subequations}
\begin{align}
    \left .\frac{\d x}{\d \zeta}\right |_{\eta=\text{const}} &= \frac{\psi^x}{f_0(\nabla\psi)^2}\\
    \left .\frac{\d y}{\d \zeta}\right|_{\eta=\text{const}} &= \frac{\psi^y}{f_0(\nabla\psi)^2}\\
    \left .\frac{\d a}{\d \zeta}\right|_{\eta=\text{const}} &= - \frac{\Delta \psi}{f_0(\nabla\psi)^2} a
\end{align}
  \label{eq:etacoordinates}
\end{subequations}
First, however, we need to discuss how the computational domain should be discretized.
\subsection{Discretization of the computational domain} \label{sec:discretization}
As mentioned in the introduction and visible in Fig.~\ref{fig:topology} the computational domain is a product space.
In order to keep this property also numerically we discretize the
$\zeta$ and $\eta$ coordinates separately, i.e. we construct
$N_\zeta$ equidistant cells in $\zeta$
and $N_\eta$ equidistant in $\eta$.
Now, in order to maintain an integer number of equidistant cells in every block
we impose certain restrictions.
Let us define $L^{\text{in}}_\zeta = |f_0\psi_0|$ as the length of blocks A, C and
E in $\zeta$ and $L^{\text{out}}_\zeta = |f_0\psi_1|$ as the length of blocks B, D and F.
With this we have $L_\zeta:=L^{\text{in}}_\zeta + L^{\text{out}}_\zeta$.
Furthermore, we define $L_\eta^\text{out}$ as the length of patch
C, D, E and F in $\eta$ and $L_\eta^\text{in}$ as the length of A and B in $\eta$.
We have $L_\eta:=L_\eta^\text{in} + 2L_\eta^\text{out}$ and now define
\begin{align}
q_\zeta:=L^{\text{out}}_\zeta / L_\zeta\quad
q_\eta := L_\eta^{\text{out}}/L_\eta.
\label{eq:q}
\end{align}
Now, in order to guarantee an integer number of cells in each block
we require that
$N_\zeta^\text{in}  := (1-q_\zeta) N_\zeta$,
$N_\zeta^\text{out} := q_\zeta N_\zeta$,
$N_\eta^\text{in}   := (1-2q_\eta) N_\eta$,
and $N_\eta^\text{out} := q_\eta N_\eta$
are integer numbers.
Note that $q_\zeta$ being rational is a restriction on $L_\zeta$ and thus
on the choice of $\psi_0$ and $\psi_1$. Only one of the two can be chosen freely. Analogously, the condition in $\eta$ is fulfilled by
a proper choice of the boundaries in $\eta$.
Furthermore, with this procedure we, in particular, achieve that the X-point
always appears
as the corner of a cell and never lies inside a grid cell.

\subsection{Grid refinement} \label{sec:refinement}
For the purposes of this study we use a very basic grid refinement
technique. The idea is to simply divide each of $m_\zeta$ cells
in the $\zeta$ coordinate on each side of the separatrix by
$d_\zeta$ equidistant small cells and analogously in $\eta$ we
divide each of $m_\eta$ cells next to the X-point by $d_\eta$.
This means that in total we then have
\begin{align}
N_\zeta^\text{ref} = N_\zeta + 2m_\zeta( d_\zeta -1), \quad
N_\eta^\text{ref} = N_\eta + 4m_\eta (d_\eta-1)
\label{eq:a}
\end{align}
cells in the $\zeta$ and $\eta$ directions.
The factor $2$ in $\zeta$ appears
because we refine the cells left and right
of the separatrix each.
In $\eta$ we have to consider that the X-point appears twice (cf. Fig.~\ref{fig:topology}).
Note that if we were to divide all cells in $\zeta$ by a factor $d_\zeta$, the
refined grid would consist of $N_\zeta^{\text{ref}} = d_\zeta N_\zeta$ cells.
In particular, this means that if the error of a numerical scheme is
dominated by the refined patch, then the refined grid is equivalent to an
unrefined grid with $d_\zeta N_\zeta$ grid points (analogous in $\eta$).
The product space property is preserved in order
to keep the
implementation effort to a minimum.

%

\subsection{Algorithm} \label{sec:algorithm}
Let us finally summarize the grid generation in the following algorithm.
We assume that $\psi_0$ is given and we choose rational numbers
$q_\zeta$ and $q_\eta$ such
that
$\psi_1 = -q_\zeta \psi_0 /(1-q_\zeta)$,
$\eta_0 = -2\pi q_\eta/(1-2q_\eta)$
and
$\eta_1 = 2\pi (1+q_\eta/(1-2q_\eta))$.
Furthermore,
we assume that the $\zeta$ coordinate is discretized by a list of
$N_\zeta$ values $\zeta_i$ with $i = 0,1,\dots N_\zeta-1$
and $\eta$ is discretized by a list of $N_\eta$ values $\eta_j$ with $j = 0,1,\dots N_\eta-1$.
$N_\zeta$ and $N_\eta$ are chosen such that $q_\zeta N_\zeta$ and $q_\eta N_\eta$ are integer
numbers. Finally the list of $\zeta_i$ and $\eta_i$ can be extended by the refinement points
as described in Section~\ref{sec:refinement}.
\begin{enumerate}
  \item Find the X-point. The X-point is often known or can be computed algebraically.
    Numerically, the zeroes of $\nabla \psi$
    can be found very efficiently with a few Newton iterations,
    especially since the Hessian matrix of $\psi$ and its inverse are given analytically.
  \item Find an arbitrary point $(x,y)$ with $\psi(x,y) = 0$
    and a suitable parameterization of Eq.~\eqref{eq:etaline} around
    the X-point.
  \item Integrate Eq.~\eqref{eq:etaline} with $a_0=1$ over $\Theta=[0,2\pi]$ in patch A and use Eq.~\eqref{eq:definef} to compute $f \equiv f_0$ and $a_0 = a(\psi_0)$.
    Use any convenient method for the integration of ordinary differential equations.
  \item Integrate the streamline of Eq.~\eqref{eq:orthogonal_linesb} with $\psi=0$ and $a=f_0$
    from $\eta = 0\dots\eta_j$ for all $j$.
    The result is a list of coordinates $x(0,\eta_j), y(0,\eta_j)$ on the separatrix. These are divided into $N_\eta^{\text{in}}$ coordinates in patch A, and $N_\eta^{\text{out}}$ coordinates in patches C and E each.
  \item Using this list and $a=f_0$ as starting values integrate Eq.~\eqref{eq:etacoordinates}
    from $\zeta=0\dots\zeta_i$ for all $i$ and all $\eta_j$. This gives the map $x(\zeta_i, \eta_j), y(\zeta_i, \eta_j)$ as well as $a(\zeta_i,\eta_j)$ for all $i$ and $j$.
  \item Last, using these results and
    Eq.~\eqref{eq:orthogonal_monitor} evaluate the derivatives
    $\zeta_x(\zeta_i,\eta_j)$, $\zeta_y(\zeta_i, \eta_j)$, $\eta_x(\zeta_i,\eta_j)$, and $\eta_y(\zeta_i, \eta_j)$ for all $i$ and $j$.
\end{enumerate}

\section{The monitor metric approach for $\Delta\psi\neq0$ at the X-point} \label{sec:monitor}
For the following it is important to note
that the theorems in Section~\ref{sec:theory} do not explicitly forbid the existence
of a grid in the case $\Delta\psi\neq0$, only the existence of an orthogonal grid.
The notion of orthogonality, however, and especially the value of $\Delta\psi$, depends
on the given metric tensor $g$ (recall Eq.~\eqref{eq:def_laplacian} at this point).
So what if we were allowed to change the metric tensor $g$
such that $\Delta\psi$ would vanish at the X-point?
For example, consider the canonical metric $g^{ij}=\delta^{ij}$ and
$\psi_{xx}+\psi_{yy}\neq 0$.
If we change the canonical metric to an orthogonal metric
$g^{xx} := -\psi_{yy}$, $g^{yy} := \psi_{xx}$,
we can easily show that $\Delta\psi=0$ at the X-point in this metric.
In this case the consistency equation~\eqref{eq:consistency} allows the existence of
an orthogonal grid.
Indeed, our idea for the construction of a grid for the case $\Delta\psi\neq0$ at
the X-point begins with changing the given metric to a more suitable metric. Then
we use the algorithm in Section~\ref{sec:algorithm} to generate an orthogonal grid in the
changed metric.
This procedure of allowing the metric to be variable instead of a fixed given entity
is called the {\it monitor metric approach}~\cite{Liseikin,Wiesenberger2017}.

Of course, now the question arises what happens to the {\it physical}, Cartesian metric, which we denote $G$ in the following.
So far we have only considered the situation with one metric tensor $g$, the {\it monitor metric}.
The important step is to allow the existence of two metric tensors. The first
one is the artificial monitor metric tensor $g$ and the second one is the
{\it physical} metric tensor $G$.

If we allow two metric tensors in our domain,
 we have in fact two different notions of angles and distances.
 We can measure angles, distances and areas either in $g$ or in $G$.
 This in particular means that if two vectors are orthogonal in one metric they might not
 be in the other.
This is why the monitor metric approach does not violate
our results from Section~\ref{sec:theory}, which are true for both $g$ and $G$.
For example, even if we can construct an orthogonal grid in the monitor metric $g$, in which the Laplacian of $\psi$ vanishes, it is still
non-orthogonal in the physical metric $G$, in which the Laplacian of $\psi$ does not vanish,
and thus does not violate Theorem~\ref{th:orthogonality}, which forbids the existence
of an orthogonal grid for $\Delta\psi\neq 0$.
Unfortunately, there is no way around Theorem~\ref{th:volume} and both volume elements
$\sqrt{\bar g}$ and $\sqrt{\bar G}$
will diverge at the X-point.

It is important to realize that the monitor metric $g$ is an independent
tensor and has nothing to do with the physical metric $G$.
In fact, we would not even need a monitor \textit{metric} tensor.
The formulas in Section~\ref{sec:theory} can be simplified by defining
$\chi^{ij} := \sqrt{g}g^{ij}$ and we could then speak of a {\it monitor tensor} $\chi$, which must be symmetric and positive definite.
This approach would be slightly more general as it
also allows for the inclusion of adaption functions (see Reference~\cite{Wiesenberger2017}).
For this discussion, however,
we keep the metric tensor formulation for the sake of accessibility.

Finally, note that we use the monitor metric $g$ only for the construction of our grid.
The physical equations still use the physical metric tensor $G$, which
therefore also has to be transformed to the new coordinates. This is possible because
with the help of Eq~\eqref{eq:orthogonal_monitor} we numerically construct not only
the grid points $x(\zeta,\eta)$ and $y(\zeta,\eta)$ but also the elements
of the Jacobian matrix. With the Jacobian matrix it is of course  possible to
transform any tensor to the $\zeta,\eta$ coordinate system, in particular the metric tensor $G$.

\subsection{A constant monitor metric}
The task is the construction of a suitable monitor metric.
We suggest the constant tensor
\begin{align}
  g^\text{cte} = \alpha^{-1/2}\left( \frac{\vec v_+ \vec v_+^\mathrm{T}}{\lambda_+} - \frac{\vec v_- \vec v_-^\mathrm{T}}{\lambda_-} \right),
\end{align}
where $\vec v_+$ and $\vec v_-$ are the normalized Eigenvectors of the Hessian matrix of $\psi$ at the X-point.
$\lambda_+$ and $\lambda_-$ are the corresponding Eigenvalues.
Since at the X-point (saddle point) the Hessian matrix is indefinite we can choose $\lambda_-$ to
be the negative and $\lambda_+$ to be the positive Eigenvalue.
We choose $\alpha$ such that the determinant of $g^\text{cte}$ is unity.
With this choice $g^\text{cte}$ is symmetric, positive definite
and $\Delta\psi=0$ at the X-point.
A symbolic calculation shows us the explicit expression
\begin{align}
  g^\text{cte} &= \alpha^{-1/2}\begin{pmatrix}
    \psi_{yy}^2 - \psi_{xx}\psi_{yy}  + 2\psi_{xy}^2 & -(\psi_{xx}+\psi_{yy})\psi_{xy} \\
  -(\psi_{xx} + \psi_{yy})\psi_{xy}                   & \psi_{xx}^2 - \psi_{xx}\psi_{yy} + 2 \psi_{xy}^2
  \end{pmatrix}, \nonumber\\
  \alpha &=\left(\psi_{xy}^2-\psi_{xx}\psi_{yy} \right)\left(\left( \psi_{xx}-\psi_{yy} \right)^2 + 4\psi_{xy}^2\right),
  \label{eq:monitor_const}
\end{align}
where all derivatives of $\psi$ are evaluated at the X-point.
Note that $g^\text{cte}$ reduces to the identity if $\psi_{xx}=-\psi_{yy}$.

\subsection{The bump monitor metric}
As mentioned above, the monitor tensor in Eq.~\eqref{eq:monitor_const} produces non-orthogonal grids. This could be an issue
if orthogonality at the boundary is a requirement, e.g. for the implementation of von Neumann boundary conditions.
In fact, we need the monitor metric to take effect only in the vicinity of
the singularity.
The remaining grid may stay orthogonal in the physical metric $G$.
We therefore introduce the bump-function with amplitude $1$ and radius $\sigma$ centred on the X-point
\begin{align}
\epsilon(x,y) &= \begin{cases}
e^{1 + \left(\frac{(x-x_X)^2}{\sigma^2} + \frac{(y-y_X)^2}{\sigma^2} - 1\right)^{-1}} &\text{ for } (x-x_X)^2+(y-y_X)^2 < \sigma^2 \\
0 &\text{  else}
\end{cases}.
  \label{eq:bump}
\end{align}
With Eq.~\eqref{eq:bump} we introduce
\begin{align}
  g^\text{bump}(x,y) = \bm 1 + \epsilon(x,y) \left( g^\text{cte} - \bm 1\right),
\label{eq:monitor}
\end{align}
where $\bm 1$ is the identity tensor.

\section{Applications of the algorithm} \label{sec:numerics}
In this section we want to test the suggested algorithm in Section~\ref{sec:algorithm}.
We first present a completely analytical scenario and then
proceed
by solving
elliptic equations for a more realistic test case.
Please find codes and implementation details in the latest
\textsc{Feltor} release~\cite{FELTORv4.1}.
Specifically, we generated the results in Sections~\ref{sec:tokamak}-\ref{sec:convergence}
with the programs
\verb!separatrix_orthogonal_t.cu!, \verb!conformalX_elliptic_b.cu!,
\verb!geometryX_refined_elliptic_b.cu! as well as \verb!geometry_diag.cu!
residing in the subdirectory
\verb!feltor/inc/geometries/!.
\subsection{A simple example}
It is instructive to analyse how the algorithm behaves in an analytical example.
To this end let us consider again the flux-function from Example~\ref{ex:2}
\[ \psi = \frac{1}{2}\left( 2x^2 - y^2\right) \] together with the canonical metric tensor.
We directly have $\psi_x = 2x,\ \psi_y = -y$ and $\Delta \psi = 1 \neq 0$ in the canonical metric.
The monitor metric
\eqref{eq:monitor_const} for the present problem becomes
\begin{align*}
  g = \begin{pmatrix}
    1/\sqrt{2} &  0 \\
    0 & \sqrt{2}
  \end{pmatrix}.
\end{align*}
In this monitor metric we have $\psi^x = \sqrt{2}x$ and $\psi^y = \sqrt{2}y$ and $\Delta\psi=0$.
Equation~\eqref{eq:etacoordinates}, parameterized by $y$, is therefore solved by $a=1$ and $x(y) = \pm x_0^2/2y$, which are the contour lines of the ``correct'' $\eta=xy$ coordinate
we discussed in Example~\ref{ex:2}. In Fig.~\ref{fig:examplea} we show the resulting grid.

Now, it is interesting to discuss what goes wrong
if no monitor metric is used in connection with a non-vanishing Laplacian.
Without monitor, Eq.~\eqref{eq:etacoordinates} parameterized by $y$ reads
\begin{align*}
  \frac{\d x}{\d y } = -\frac{2x}{y}, \quad
  \frac{\d y}{\d y } =  1, \quad
  \frac{\d a}{\d y } = \frac{a}{y}.
\end{align*}
As initial conditions for $x$ and $y$ we choose the separatrix given by $y_0 = \pm\sqrt{2}x_0$.
As proposed in the algorithm we choose $a_0=1 =$ const on the separatrix.
We then have the solutions $a(y) = y/y_0$ and $x(y) = \pm  y_0^3 /(y^2\sqrt{2})$,
which we plot in Fig.~\ref{fig:exampleb} and
which lead to
\[ a(x,y) = \left(\frac{|y|}{\sqrt{2} |x|}\right)^{1/3}. \]
This form of $a(x,y)$ is clearly problematic since then $\bar G_{\eta\eta} = 1/(a^2(\nabla\psi)^2)$
and the volume form $\sqrt{\bar G} = 1/(a(\nabla\psi)^2)$
diverge on the $y=0$ line and become $0$ for $x=0$.
Note that $\bar G_{\eta\eta}$ determines the physical cell-size (length)
$l_\eta = \sqrt{\bar G_{\eta\eta}} \triangle \eta$, where $\triangle \eta$ is the cell-size in the computational domain.
If $\bar G_{\eta\eta}$
becomes
very small, then so will $l_\eta$. This is clearly visible in Fig.~\ref{fig:exampleb} on the $x=0$ line.
Now, a large variation in grid-size in a small region of the
physical domain is highly undesirable in any
numerical scheme. In advection type systems a small cell-size deteriorates the CFL condition,
while in inversion problems the large variations in cell-size makes the discretization matrix highly ill-conditioned.
On the other hand, large cell-sizes $\l_\eta$ at $y=0$ mean that this region cannot be accurately resolved.
The cell-sizes seem unproblematic in Fig.~\ref{fig:exampleb} at $y=0$. However, the problem manifests in convergence studies, where the
cell-size $\triangle\eta$ in the computational domain tends to zero.
Since $l_\eta = \sqrt{\bar G_{\eta\eta}}\triangle\eta$ and $\bar G_{\eta\eta}\to \infty$ at $y=0$, the physical cell size $l_\eta$ might not tend to zero
or not at the same rate as $\triangle \eta$. This behaviour
deteriorates or completely inhibits convergence of a numerical scheme.
We therefore conclude that using the algorithm without a proper monitor metric is inadvisable.

\begin{figure}[htbp]
\centering
\subfloat[with monitor (non-orthogonal)]{\includegraphics[trim = 0px 0px 0px 0px, clip, scale=0.6]{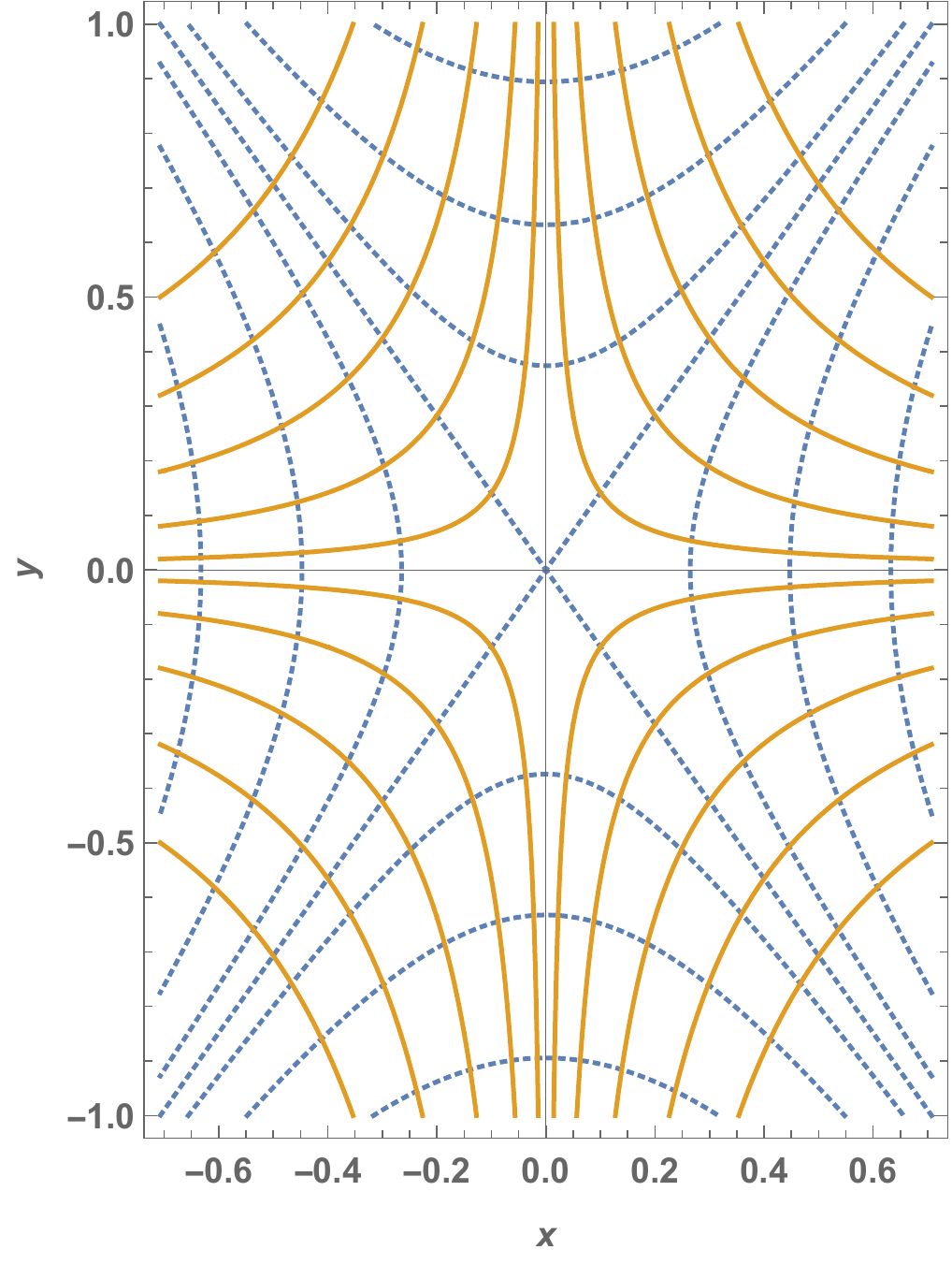}\label{fig:examplea}}
\subfloat[without monitor (orthogonal)]{\includegraphics[trim = 0px 0px 0px 0px, clip, scale=0.6]{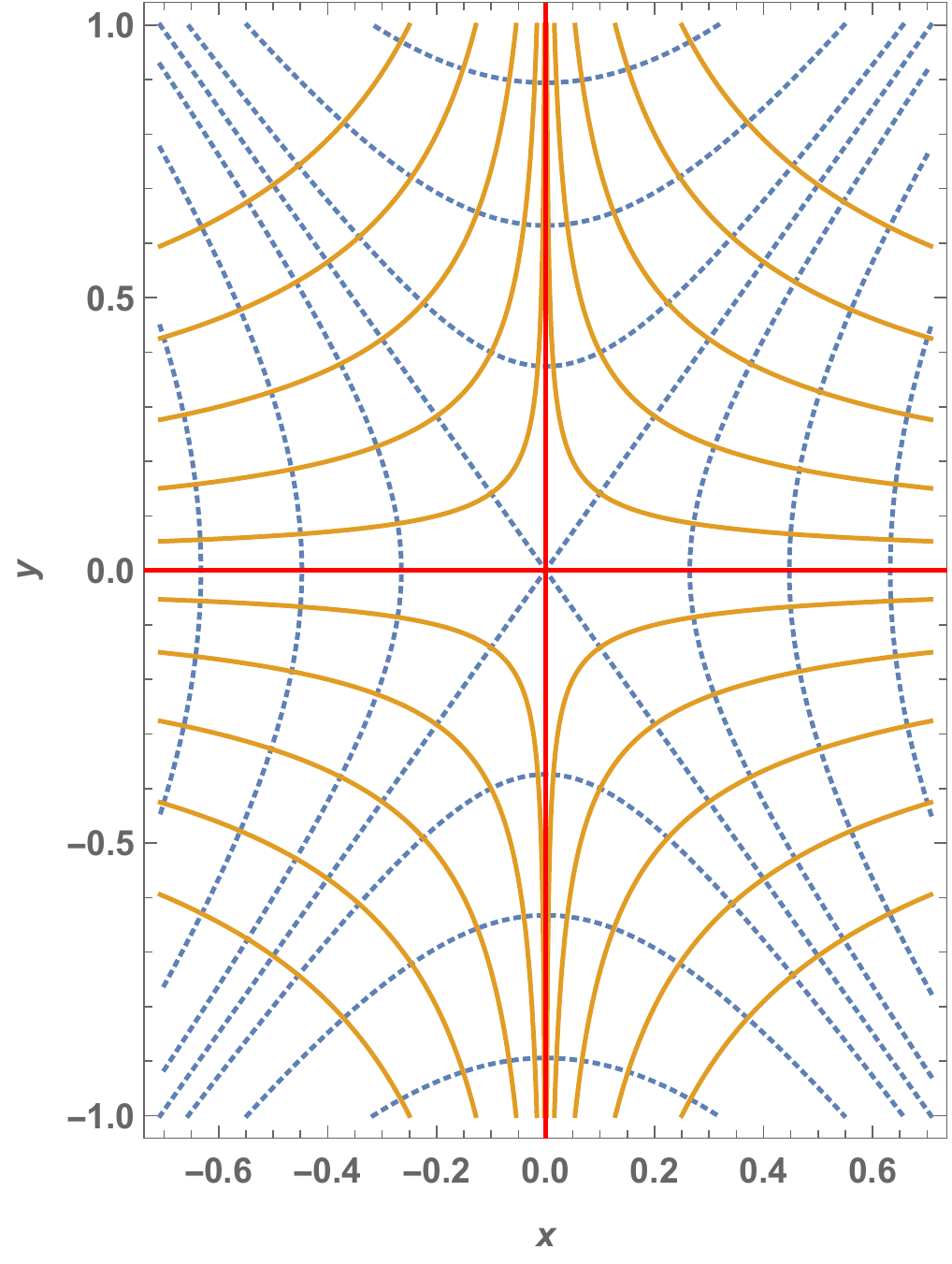}\label{fig:exampleb}}
\caption{
  Grids constructed for $\psi=(2x^2-y^2)/2$ with (a) and without (b) monitor metric.
  We plot the contours $\psi \in \pm\{ 0,0.07, 0.2, 0.4\}$ in dotted blue; in solid orange we have the $\eta$ coordinates through the separatrix $y_0 = \pm \sqrt{2}x_0$ with
  the points $x_0 \in \pm \{ 0.1, 0.2,0.3,0.4,0.5\}$.
  Without monitor metric the physical metric element $\bar G_{\eta\eta}$ becomes $0$ on the $x=0$ line and diverges
  on the $y=0$ line. We indicate this inconsistency with the solid red lines in (b).
}
\label{fig:example}
\end{figure}

\subsection{Tokamak grids}\label{sec:tokamak}
Before we can construct a grid for a realistic scenario we need
to construct an analytical flux-function with X-point.
Reference~\cite{Cerfon2010} presents ``One size fits all'' analytic solutions to
the Grad-Shafranov equation using Solov'ev profiles.
The solution $\psi$ depends on thirteen coefficients. The exact
values reside in the file
\verb!geometry_params_Xpoint.js! in \verb!feltor/inc/geometries! of the accompanying dataset~\cite{FELTORv4.1}.
We will use
this solution for $\psi$ throughout the remainder
of this section.

In Fig.~\ref{fig:gridX} we show the grid produced by our algorithm
with and without refinement.
\begin{figure}[htbp]
\centering
\subfloat[regular grid]{\includegraphics[trim = 0px 0px 0px 0px, clip, scale=0.48]{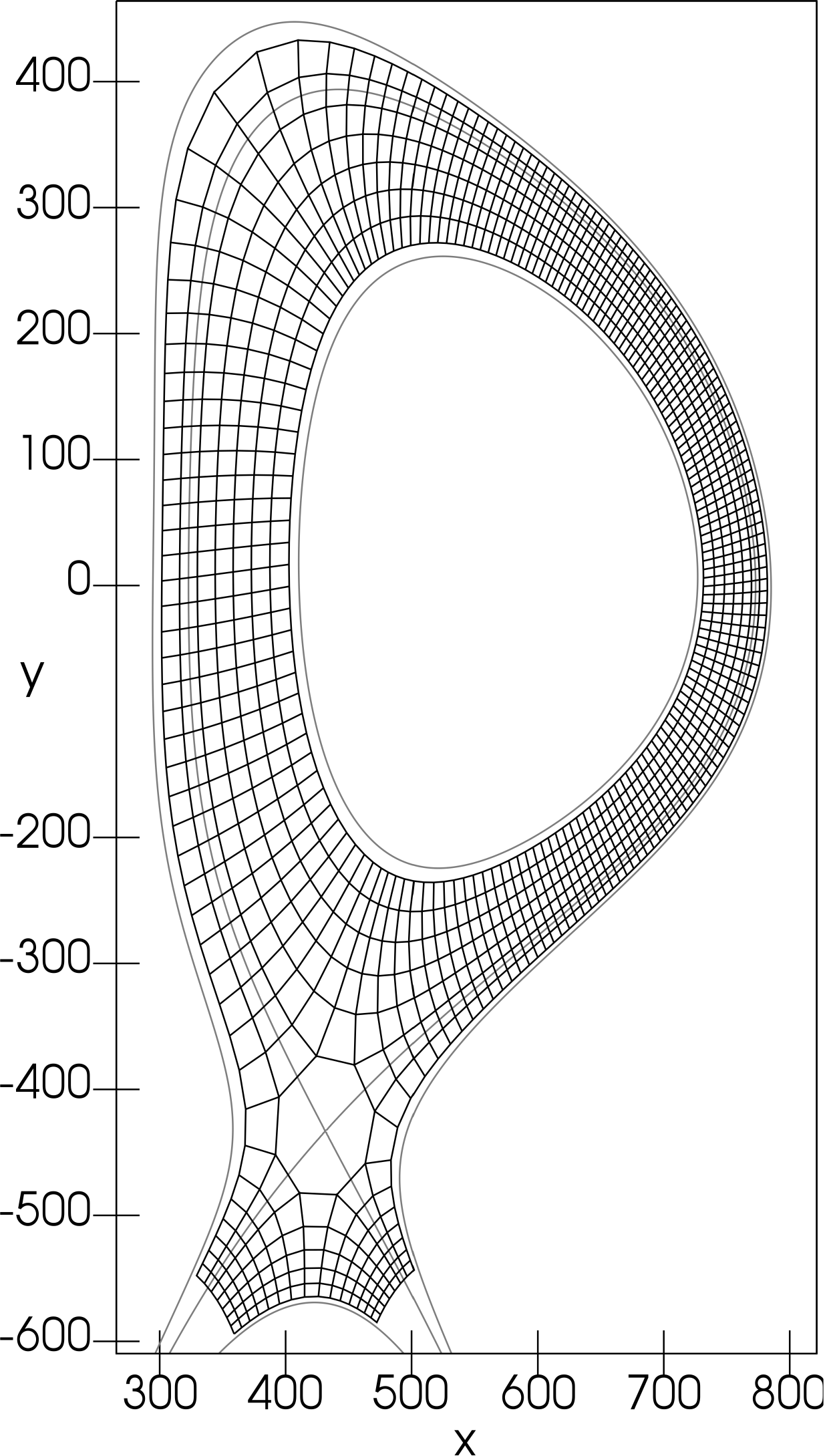}\label{fig:gridXa}}
\subfloat[refined grid]{\includegraphics[trim = 0px 0px 0px 0px, clip, scale=0.48]{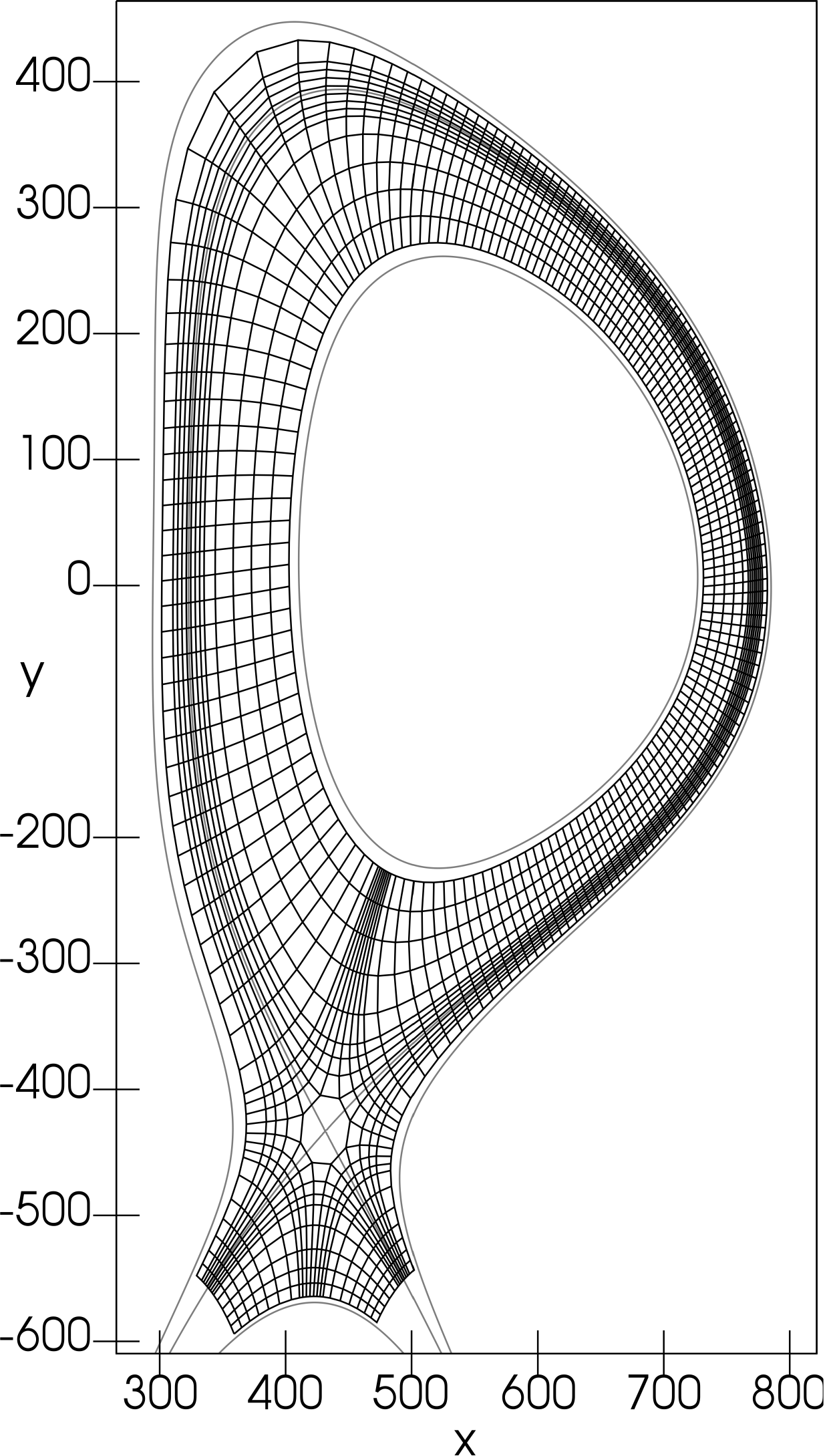}\label{fig:gridXb}}
\caption{
 Orthogonal grids (in the monitor metric Eq.~\eqref{eq:monitor}) with $P=1$, $N_\zeta=8$ and
 $N_\eta = 176$. We have $q_\zeta=1/4$ and $q_\eta=1/22$. Regular grid (a), and
 refined grid with $m_\zeta=m_\eta=1$ and $d_\zeta=d_\eta=4$ (b).
Note that the nodes represent cell centres and not the actual cell boundaries.
The grey lines denote the domain boundaries at $\psi_0=-15$ and $\psi_1=5$.
We also plot the line $\psi=0$ to visualize the separatrix. Note the non-orthogonality
at the X-point.
}
\label{fig:gridX}
\end{figure}
In the regular grid Fig.~\ref{fig:gridXa}
the cell distribution is fairly homogeneous except in the vicinity
of $x=y=350$, where cells become very
large on the outside of the domain, and around the X-point.
In the unrefined grid Fig.~\ref{fig:gridXa} the cells adjacent to the X-point are too large to
sufficiently resolve this area. The resolution is improved in the
refined version of the grid in Fig.~\ref{fig:gridXb}.
Here, we divide the last cell on each side of the X-point
in both the $\zeta$ and the $\eta$ direction by four (i.e. $m_\zeta=m_\eta=1$ and $d_\zeta=d_\eta=4$).

%

The here proposed refinement strategy is sufficient for the
present study.
However, in any production code the refinement techniques should be
re-evaluated.
The downside of the chosen product space refinement is that cells
become unnecessarily small in the regions outside the X-point region.
This is unfavourable for advection type equations.
The goal must be to keep the cell sizes as
homogeneous as possible across the domain so as not to deteriorate
the CFL condition for advection-diffusion type problems.
Fortunately, the X-point is a single point such that
the refinement is local and shouldn't present any performance issues.
A direct solution could be giving up the product space property
of the computational space and restricting the refinement to the area
around the X-point. This, however, increases the implementation complexity.
Let us point out here that there are advanced techniques available
that might be worth considering for an efficient implementation.
For elliptic grids it is known that with the help of adaption
functions and monitor metrics the distribution of cells across the
domain can be controlled. In this way the coordinate transformation
itself includes a grid
refinement~\cite{Liseikin, Glasser2006, Wiesenberger2017, Vaseva2009}.
Although these techniques
are very powerful their applicability to the present case
remains to be explored. The difficulty lies in the fact that an elliptic
equation has to be inverted on the domain,
which can prove difficult to achieve due to the diverging metric at the X-point.
A converging solver, however, is a prerequisite for the generation of elliptic grids. This motivates the following study.

\subsection{Discretization of an elliptic equation}
The two-dimensional elliptic equation
  $\Delta \phi  = \rho$
  in the new coordinates reads
\begin{align}
  \frac{\partial}{ \partial\zeta}\left(\sqrt{\bar G} \left(\bar G^{\zeta\zeta}
  \frac{\partial\phi}{\partial\zeta} + \bar G^{\zeta\eta} \frac{\partial\phi}{\partial\eta} \right)\right) + 
  \frac{\partial}{\partial\eta}\left(\sqrt{\bar G} \left(\bar G^{\eta\zeta}
  \frac{\partial\phi}{\partial\zeta} + \bar G^{\eta\eta} \frac{\partial\phi}{\partial\eta} \right)\right)
  &=\sqrt{\bar G}\rho,
    \label{eq:elliptic}
\end{align}
where we multiplied with the
volume element to make the left hand side symmetric.
In this equation $\bar G$ denotes the Cartesian metric $G$ transformed to the
new coordinate system.
We use this equation to test the quality of our grids.

We use a local discontinuous Galerkin method to discretize this equation
on the computational ($\zeta,\eta$) domain~\cite{Cockburn2001}.
This method approximates the solution by a order $P-1$ polynomial in each
cell with $P$ being the number of polynomial coefficients.
In contrast to finite element methods the approximation is allowed to
be discontinuous at cell boundaries.
As described in Reference~\cite{Held2016} we compute the left side
of Eq.~\eqref{eq:elliptic} by discretizing the first derivatives
$\partial/\partial\zeta$ and $\partial/\partial\eta$ with a forward discretization.
These are just the
discretizations we would have for the discretization of first
derivatives in a Cartesian grid. Of course, we need to
take into account the special topology of the computational space (cf.~Fig.~\ref{fig:topology}).
Note that the derivative is a topological entity, which means
that no metric is needed to define a directional derivative on a
manifold~\cite{Frankel}.
The metric elements can be multiplied
to the first derivatives by simple point-by-point multiplication.
The second derivatives can be computed by using the adjoint of the
first derivatives. Note that in the local discontinuous Galerkin scheme
we need to add jump terms to the discretizations to penalize the discontinuities at the
cell boundaries~\cite{Cockburn2001}. Without these the
numerical solutions fail to converge at all.
We are then finally left with a self-adjoint
discretization of the elliptic operator.

It is a priori unclear whether our numerical scheme can cope with the diverging metric elements at the X-point, even if the metric or any other function is never evaluated
at the X-point itself.
Note that the coordinate singularity is weak
in the sense that the integration over the volume element
yields the correct volume of the domain.
We verified this numerically, that is we numerically evaluate the volume $\int\d\zeta\d\eta \sqrt{\bar G(\zeta,\eta) }$ using Gauss--Legendre integration in computational space. We find equivalent results to integrating $\int_\Omega \d x\d y$ directly (with $\Omega$ being the physical domain) in
Cartesian coordinates using a simple quadrature rule.
Thus, the weakly formulated discontinuous Galerkin
scheme should be able to cope with the diverging metrics without any necessary adaptions.

\subsection{Convergence tests}\label{sec:convergence}
We now test the convergence with
a ``bump'' solution
\begin{align}
\phi(x,y) &= \begin{cases}
e^{1 + \left(\frac{(x-x_0)^2}{\sigma^2} + \frac{(y-y_0)^2}{\sigma^2} - 1\right)^{-1}} &\text{ for } (x-x_0)^2+(y-y_0)^2 < \sigma^2 \\
0 &\text{  else}
\end{cases}
  \label{eq:sol_blob}
\end{align}
with centre $(x_0,y_0)=(480,-300)$ and
radius $\sigma=70$. This solution has no variation across the X-point situated at
approximately $(x_X,y_X)=(431,-433)$. The boundary conditions are homogeneous Dirichlet in
$\zeta$ and $\eta$.
We plot the analytic solution Eq.~\eqref{eq:sol_blob} for the given parameters
in Fig.~\ref{fig:two_blobs}.
\begin{figure}[htpb]
    \centering
    \includegraphics[width= 0.6\textwidth]{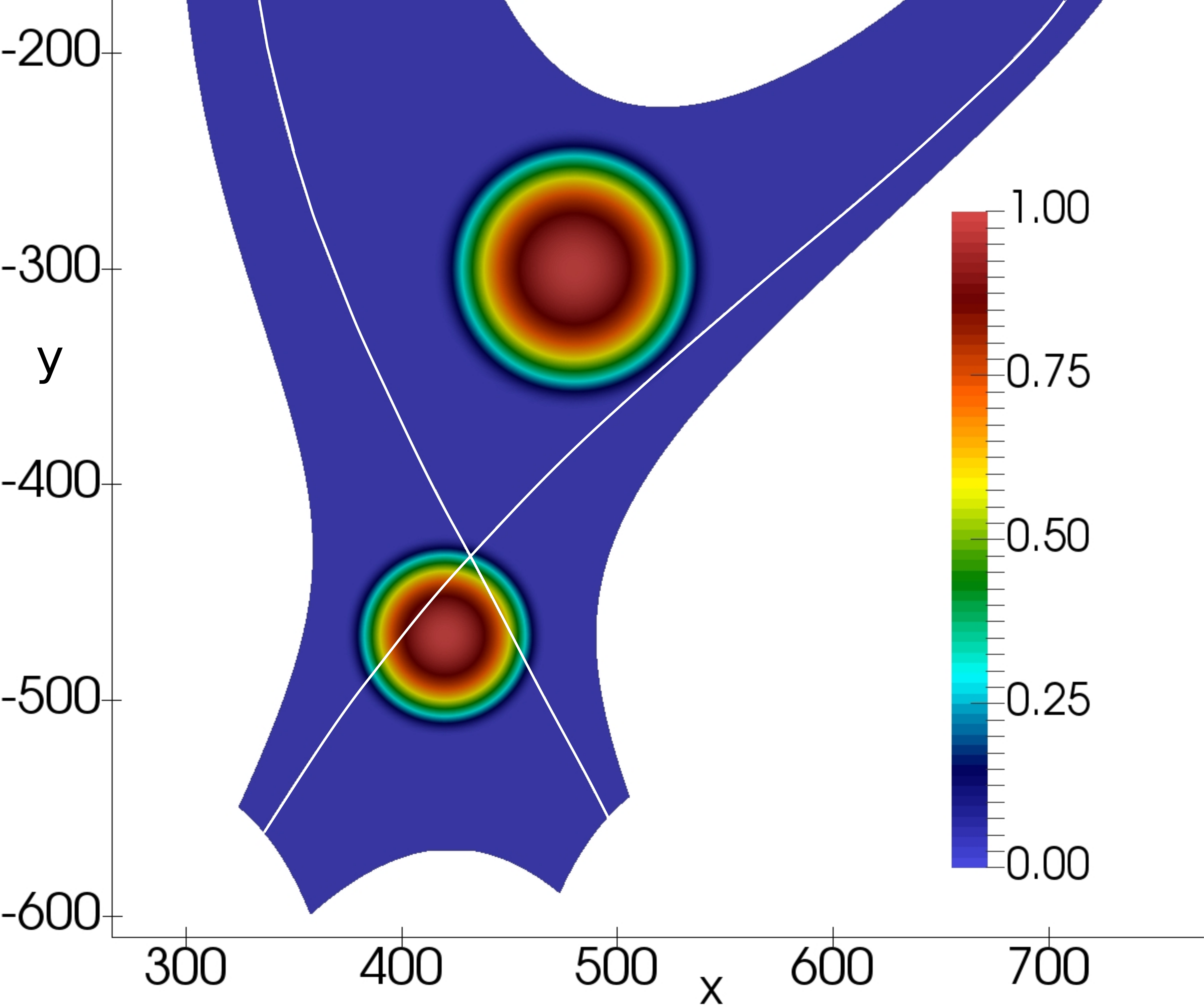}
    \caption{
      The location of our two ``bump'' solutions,
      the upper located at $(x_X,y_X)=(431,-433)$
      with $\sigma=70$ and the lower bump located near the X-point at $(x_0,y_0)=(420,-470)$ with $\sigma=50$.
      We also indicate the location of the separatrix with the white line.
  }
    \label{fig:two_blobs}
\end{figure}
We can insert Eq.~\eqref{eq:sol_blob} into Eq.~\eqref{eq:elliptic} to compute the
corresponding right hand side analytically. With this right hand side given we then compute
a numerical solution $\phi_\text{num}$ to Eq.~\eqref{eq:elliptic}.
The relative error and the order of convergence can be defined in the $L_2$ norm as
\begin{align}
  \eps = \left(\frac{\int_{\zeta_0}^{\zeta_1}\d \zeta \int_{\eta_0}^{\eta_1}\d \eta \sqrt{\bar G} (\phi_{\text{num}} - \phi_{\text{ana}})^2 }
  {\int_{\zeta_0}^{\zeta_1}\d \zeta \int_{\eta_0}^{\eta_1}\d \eta \sqrt{\bar G} \phi_{\text{ana}}^2 }\right)^{1/2}, \quad \mathcal O = \frac{\ln[ \eps(2N)/\eps(N)]}{\ln(2)},
  \label{eq:error_norm}
\end{align}
where $\sqrt{\bar G}\d \zeta \d \eta$ is the correct volume form in the $\zeta,\eta$ coordinate system.
The order $\mathcal O$
is computed via two consecutive errors, between which the number of cells $N$ is doubled.
\begin{table}[htbp]
\begin{tabular}{cccccccccc}
\toprule
\multicolumn{2}{c}{}& \multicolumn{2}{r}{P=1} &  \multicolumn{2}{r}{P=2}  & \multicolumn{2}{r}{P=3}  & \multicolumn{2}{r}{P=4}  \\
$N_\zeta$  & $N_\eta$ & error & order & error & order& error & order& error & order  \\
\midrule
4 & 88 & 5.46E+00 & \multicolumn{1}{l}{} & 5.71E-01 & \multicolumn{1}{l}{} & 5.95E-01 & \multicolumn{1}{l}{} & 1.81E-01 & \multicolumn{1}{l}{} \\
8 & 176 & 4.53E-01 & 3.59 & 4.37E-01 & 0.39 & 1.35E-01 & 2.14 & 3.01E-02 & 2.59 \\
16 & 352 & 2.33E-01 & 0.96 & 4.68E-02 & 3.22 & 1.06E-02 & 3.68 & 2.08E-03 & 3.86 \\
32 & 704 & 1.99E-02 & 3.55 & 6.28E-03 & 2.90 & 8.14E-04 & 3.70 & 1.59E-04 & 3.71 \\
64 & 1408 & 8.63E-03 & 1.21 & 1.33E-03 & 2.24 & 9.66E-05 & 3.07 & 1.32E-05 & 3.60 \\
128 & 2816 & 4.41E-03 & 0.97 & 3.34E-04 & 2.00 & 1.32E-05 & 2.87 & 1.70E-06 & 2.95 \\
\bottomrule
\end{tabular}
\caption{
Convergence table for the bump solution Eq.~\eqref{eq:sol_blob} away from the X-point
for various polynomial orders. No grid refinement is used.
    Error and order are defined via Eq.~\eqref{eq:error_norm}.
    Average orders are:
    ($2.06$, $P=1$);
    ($2.15$, $P=2$);
    ($3.09$, $P=3$);
    ($3.34$, $P=4$).
}
\label{tab:blob}
\end{table}


In Table~\ref{tab:blob} we show the error and corresponding
orders for various polynomial orders and grid resolutions.
A ratio of $N_\eta/N_\zeta=20$ is chosen such that the aspect
ratio of the resulting cells is approximately unity.
We observe a rather irregular convergence for all values of $P$.
We attribute this behaviour to the irregular shapes of the grid cells at
the location of the bump comparing Fig.~\ref{fig:two_blobs} to Fig.~\ref{fig:gridX}.
Although in this example there is no variation of the solution at the X-point
the grid cells nevertheless
become larger in its vicinity. Also the aspect ratio of the cells in the upper half
of the bump are different from the aspect ratio in the lower half.
We compute an average order of convergence with the values at $N_\zeta=4$ and
$N_\zeta=128$ in an attempt to smooth the variations. The expected order $P$
is then approximately recovered for $P=2$ and $P=3$. For $P=1$ the average order is approximately
$2$, however, the orders of the two finest grids indicate that only the expected first order is
recovered. For $P=4$ the computed average order is more than $20$\% too small. On the other
side the absolute errors in the $P=4$ grids are the smallest among all grids.
Finally, note that we also observed this irregular convergence in a similar example in Reference~\cite{Wiesenberger2017}, where no X-point was present in the domain.

Let us now turn our attention to the case when variations around the X-point appear in the
solution. We use the bump defined in Eq.~\eqref{eq:sol_blob}
with $(x_0,y_0)=(420,-470)$ and $\sigma=50$. This is the lower bump in Fig.~\ref{fig:two_blobs}.
\begin{table}[htbp]
\begin{tabular}{cccccccccc}

\toprule
\multicolumn{2}{c}{}& \multicolumn{2}{r}{P=1} &  \multicolumn{2}{r}{P=2}  & \multicolumn{2}{r}{P=3}  & \multicolumn{2}{r}{P=4}  \\
$N_\zeta$  & $N_\eta$ & error & order & error & order& error & order& error & order  \\
\midrule
4 & 88 & 1.65E+01 & \multicolumn{1}{l}{} & 1.36E+00 & \multicolumn{1}{l}{} & 7.63E-01 & \multicolumn{1}{l}{} & 8.74E-01 & \multicolumn{1}{l}{} \\
8 & 176 & 2.71E+00 & 2.60 & 6.48E-01 & 1.07 & 7.32E-01 & 0.06 & 3.28E-01 & 1.42 \\
16 & 352 & 3.74E-01 & 2.86 & 8.43E-01 & -0.38 & 3.70E-01 & 0.98 & 1.48E-01 & 1.15 \\
32 & 704 & 5.72E-01 & -0.61 & 2.98E-01 & 1.50 & 8.55E-02 & 2.11 & 8.84E-02 & 0.74 \\
64 & 1408 & 1.95E-01 & 1.55 & 6.25E-02 & 2.25 & 1.83E-02 & 2.22 & 2.95E-02 & 1.58 \\
128 & 2816 & 5.36E-02 & 1.86 & 3.48E-02 & 0.85 & 4.14E-02 & -1.18 & 4.89E-03 & 2.59 \\
\bottomrule
\end{tabular}
\caption{
Convergence table for the bump solution Eq.~\eqref{eq:sol_blob} on the X-point
for various polynomial orders. No grid refinement is used.
    Error and order are defined via Eq.~\eqref{eq:error_norm}.
    Average orders are:
    ($1.65$, $P=1$);
    ($1.06$, $P=2$);
    ($0.84$, $P=3$);
    ($1.50$, $P=4$).
}
\label{tab:blobX}
\end{table}

In Table~\ref{tab:blobX} we show the results of the same experiment as in Table~\ref{tab:blob}. Also in this case the convergence rates are highly irregular and even negative in some
cases. Again, we compute the average orders, which this time lie between $0.86$ and $1.5$.
From this we conclude
that the X-point reduces the convergence rate to
around order $1$ for all numbers of polynomial coefficients.
Inspection of the error reveals that indeed the error is entirely
dominated by the region around the X-point.
We attribute the loss of convergence
to the diverging metric elements.
As seen in Fig~\ref{fig:gridX} these lead to large cell sizes in $\zeta$ and $\eta$.
If we define the cell size in the computational domain as
$\Delta_\zeta:=L_\zeta/N_\zeta$ and $\Delta_\eta:=L_\eta/N_\eta$,
we compute the cell sizes in the physical domain by
\begin{subequations}
\begin{align}
  l_\zeta &:= \sqrt{\bar G_{\zeta\zeta}} \triangle\zeta = \sqrt{\bar G}\sqrt{ \bar G^{\eta\eta}} \triangle\zeta  \propto \sqrt{\bar g}\sqrt{ \bar g^{\eta\eta}} \triangle\zeta =\left(a_0|\nabla\psi|\right)^{-1}\triangle\zeta,\\
  l_\eta &:= \sqrt{\bar G_{\eta\eta}} \triangle\eta = \sqrt{\bar G}\sqrt{ \bar G^{\zeta\zeta}} \triangle\eta \propto \sqrt{\bar g}\sqrt{ \bar g^{\zeta\zeta}} \triangle\eta = \left(f_0|\nabla\psi|\right)^{-1} \triangle\eta,
\end{align}
\end{subequations}
where we approximate the length in the actual Cartesian metric $G$ with the length in the monitor metric $g$.
Clearly, the cell sizes $l_\zeta$ and $l_\eta$ diverge at the
X-point due to the vanishing gradient in $\psi$.
Now in the previous tests we looked for convergence in terms of $\triangle\zeta$ and $\triangle\eta$
that is  $\varepsilon \propto \triangle\zeta^P$.
However, it could be argued that the error should be proportional to
$l_\zeta$ instead of $\triangle\zeta$. As long as $|\nabla\psi|$ is
well-behaved the definitions are the same, but at the X-point
this makes a difference.
This means that even though we reduce $\triangle\zeta$ and $\triangle\eta$
in the computational domain, $l_\zeta$ and $l_\eta$ do not shrink with the same rate, which
might explain the reduced orders in Table~\ref{tab:blobX}.

In order to remedy the loss of convergence due to large $l_\zeta$ and $l_\eta$ we
use the grid refinement from Section~\ref{sec:refinement}. The grid refinement has the goal to reduce the
sizes $\triangle\zeta$ and $\triangle\eta$ locally around the X-point until the physical
lengths $l_\zeta$ and $l_\eta$ at the X-point equal the lengths in the remaining regions of the grid.
If the error at the X-point is small enough, the error should then be dominated by the
error in the remaining grid.
Theoretically, this should then restore the expected order.

Numerically, we test this hypothesis using again the bump on the X-point as a solution to Eq.~\eqref{eq:elliptic}.
\begin{table}[htbp]
\begin{tabular}{lllllllll}
\toprule
\multicolumn{1}{l}{$N_\zeta\times N_\eta$} & \multicolumn{2}{l}{$4\times 88$}  & \multicolumn{2}{l}{$8\times 176$}   & \multicolumn{2}{l}{$16\times 352$ }  & \multicolumn{2}{l}{$32\times 704$}   \\
$d_\zeta=d_\eta$  & error & order & error & order & error & order & error & order  \\
\midrule
1 & 7.63E-01 &  & 7.32E-01 &  & 3.70E-01 &  & 8.55E-02 &  \\
2 & 7.32E-01 & x & 3.23E-01 &  & 9.54E-02 &  & 1.81E-02 &  \\
4 & 3.70E-01 &  & 1.36E-01 & 2.43  & 2.19E-02 & 2.64 & 4.06E-02 &  \\
8 & 8.54E-02 &  & 1.54E-01 &  & 5.28E-02 &  & 6.96E-03 &  \\
16 & 1.79E-02 &  & 1.31E-01 &  & 1.41E-02 & & 2.66E-03 & 3.04 \\
32 & 5.38E-02 &  & 1.51E-01 &  & 1.68E-02 &  & 2.18E-03 &  \\
\bottomrule
\end{tabular}
\caption{
Convergence table for the bump solution Eq.~\eqref{eq:sol_blob} on the X-point
for $P=3$ and increasing refinement. Note that for $N_\zeta\times N_\eta = 4\times 88$ the bump lies entirely in the refined region.
    Error and order are defined via Eq.~\eqref{eq:error_norm}. The orders are
    computed with the error values in the same row the orders are indicated.
    For $N_\zeta\times N_\eta> 4\times 88$ these are the rows where the error starts to stagnate with increasing refinement.
}
\label{tab:blobX_refined}
\end{table}

In Table~\ref{tab:blobX_refined} we show results for a fixed value $P=3$.
We start with unrefined grids ($d_\zeta=d_\eta=1$) of increasing resolutions $N_\zeta$ and $N_\eta$.
Then, we divide the last cells adjacent to the X-point into $d_\zeta=d_\eta$ parts and
repeat the inversion of Eq.~\eqref{eq:elliptic}.
Note that since we only refine the last cells adjacent to the X-point the actual region
in the physical domain
that is refined changes with grid resolutions.
This leads to the effect that the solution for the lowest resolution lies
entirely in the refined region and thus grid-refinement always leads to an improved
error. In this case, the errors are equal to the corresponding $P=3$ column in Table~\ref{tab:blobX}, because, as discussed in Section~\ref{sec:refinement}, the
refined grids are equivalent to the unrefined grids of increased resolution.
Only for higher resolutions in $N_\zeta\times N_\eta$ we observe error stagnation for higher grid refinement.
If we compute the order with the stagnating values, we
recover $P\approx 3$.

In order to be entirely certain that the error is dominated by the unrefined region
we repeat our experiment with
the sum of both upper and lower bumps as a solution to Eq.~\eqref{eq:elliptic} visible in Fig.~\ref{fig:two_blobs}.
\begin{table}[htbp]
\begin{tabular}{lllllllll}
\toprule
\multicolumn{1}{l}{$N_\zeta\times N_\eta$} & \multicolumn{2}{l}{$4\times 88$}  & \multicolumn{2}{l}{$8\times 176$}   & \multicolumn{2}{l}{$16\times 352$ }  & \multicolumn{2}{l}{$32\times 704$}   \\
$d_\zeta=d_\eta$  & error & order & error & order & error & order & error & order  \\
\midrule
1 & 7.16E-01 &  & 4.57E-01 &  & 2.18E-01 &  & 4.96E-02 &  \\
2 & 5.81E-01 &  & 1.92E-01 &  & 5.84E-02 &  & 1.06E-02 &  \\
4 & 3.72E-01 & x & 1.21E-01 & 1.62 & 1.55E-02 & 2.97 & 2.35E-02 &  \\
8 & 2.71E-01 &  & 1.33E-01 &  & 3.43E-02 &  & 4.19E-03 &  \\
16 & 2.55E-01 &  & 1.18E-01 &  & 1.23E-02 &  & 1.66E-03 & 3.23 \\
32 & 3.29E-01 &  & 1.33E-01 &  & 1.41E-02 &  & 1.43E-03 &   \\
\bottomrule
\end{tabular}
\caption{
  Convergence table for the two bumps solution shown in Fig.~\ref{fig:two_blobs}
for $P=3$ and increasing refinement.
    Error and order are defined via Eq.~\eqref{eq:error_norm}. The orders are
    computed with the error values in the same row the orders are indicated.
    These are the rows where the error starts to stagnate with increasing refinement.
}
\label{tab:two_blobs}
\end{table}

In Table~\ref{tab:two_blobs} we show results for a fixed value $P=3$.

Now, the error first decreases and then stagnates even for $N_\zeta\times N_\eta=4\times88$. The stagnating values
are comparable to the stagnating values in Table~\ref{tab:blobX_refined} and the values in the $P=3$ column of Table~\ref{tab:blob}.
The latter observation strongly supports the conclusion
that with enough refinement at the X-point the error is dominated by
the error in the unrefined region.
If we compute the order with the stagnating values, we
indeed recover $P\approx 3$. The first value of $1.62$ at the $8\times 176$ could be
explained by the relatively large errors in the $4\times 88$ and $8\times 176$ grids.
Convergence only sets in at higher resolutions.

\section{Conclusion}
In summary we make two statements.
First, a structured aligned orthogonal grid
can be consistently constructed only when the separatrix
forms a right angle ($\Delta\psi=0$) at the X-point. We discuss how with the help of a monitor metric
the notion of orthogonality can change in a way that a grid construction is
possible.
This is based on our theoretical analysis
and the following discussion of our algorithm for structured grid generation.
Second, convergence of a numerical discretization of an elliptic equation on the grid may reduce to order one due to the diverging volume element
or cell sizes at
the X-point. Our local discontinuous Galerkin discretization
converges with order approximately $P$ only
as long as the solution is constant around the X-point.
We show that grid refinement is needed around the
X-point in order to achieve convergence at order greater than one, if the solution varies across the X-point. This is the typical situation in a practical application of the grid.


\section*{Acknowledgements}
The research leading to these results has received funding from the European Union's Horizon 2020 research and innovation programme under the Marie Sklodowska-Curie grant agreement no. 713683 (COFUNDfellowsDTU).
This work was supported by the Austrian Science Fund (FWF) Y398.

\bibliography{refs}
\bibliographystyle{iopart-num.bst}


\end{document}